\documentclass[journal, 10pt, twocolumn]{IEEEtran}
\usepackage{amsmath,amsfonts,amssymb, amsthm}

\usepackage{algorithm}
\usepackage[noend]{algorithmic}
\usepackage{optidef} %for optimization
\usepackage{array}
\usepackage[dvipsnames]{xcolor}
\usepackage{textcomp}
\usepackage{stfloats}
\usepackage{url}
\usepackage{tabularx, booktabs, multirow, makecell}
\usepackage{threeparttable} 
\usepackage{bbding}
\usepackage{lineno}
\usepackage{cite}
\usepackage{enumitem}
\usepackage{booktabs} 
\usepackage{multirow} 
\usepackage{graphicx} 
\usepackage{siunitx} 
\usepackage[normalem]{ulem}
\newtheorem{proposition}{Proposition}



\usepackage{graphicx}
\ifCLASSOPTIONcompsoc
    \usepackage[caption=false, font=normalsize, labelfont=sf, textfont=sf]{subfig}
\else
\usepackage[caption=false, font=footnotesize]{subfig}
\fi

\makeatletter
\renewcommand\p@subfigure{}% 不自動加主圖編號
\renewcommand{\subref}[1]{(\ref{#1})}
\makeatother

\def\BibTeX{{\rm B\kern-.05em{\sc i\kern-.025em b}\kern-.08em
    T\kern-.1667em\lower.7ex\hbox{E}\kern-.125emX}}
\usepackage{balance}

\begin{document}
\title{Joint Age-of-Latent and Resource Minimization for Wireless Multi-Camera Perception With Temporal Window Selection}

\author{
Chih-Yu Lin\IEEEauthorrefmark{1},
Wanjiun Liao\IEEEauthorrefmark{1},
Sumudu Samarakoon\IEEEauthorrefmark{2},
and Mehdi Bennis\IEEEauthorrefmark{2}
\\
\IEEEauthorrefmark{1}Department of Electrical Engineering,
National Taiwan University, Taipei, Taiwan
\\
\IEEEauthorrefmark{2}University of Oulu, Oulu, Finland
\\
E-mails: \{d09921030,wjliao\}@ntu.edu.tw,
\{sumudu.samarakoon,mehdi.bennis\}@oulu.fi\thanks{This work has been submitted to the IEEE for possible publication. Copyright may be transferred without notice, after which this version may no longer be accessible.}
}
\maketitle

\begin{abstract}
Multi-camera wireless perception requires a base station (BS) to maintain timely and reliable latent beliefs from distributed cameras under limited uplink resources. Conventional Age-of-Information (AoI) measures the age of the latest received update but not task-relevant latent content. We introduce Age-of-Latent (AoL) to quantify the freshness of each camera's latest decoded latent representation. A finite temporal window of integration (TWI) determines the task-commitment time and available uplink slots, creating a tradeoff among update opportunities, prediction duration, commit-time AoL, prediction reliability, and accumulated resource cost.
Within this horizon, redundant or overlapping views enable correlated prediction, reducing reliance on the highest-cost communication and encoding configuration to maintain BS-side latent beliefs.
We formulate a joint AoL-resource minimization problem coupling task-level TWI selection with slot-level encoder selection, scheduling, and NOMA power allocation under prediction-reliability constraints.
We propose correlation-aware latent prediction for AoL minimization (CoLA), which uses Lyapunov optimization for task-level TWI selection based on AoL-resource cost and prediction uncertainty, and proximal policy optimization for slot-level resource control.
Results on a warehouse multi-camera RF dataset show that CoLA adapts the TWI to camera-update availability and achieves the most favorable AoL-resource tradeoff among the benchmarks while maintaining prediction reliability, particularly under prolonged and severe camera outages.
\end{abstract}

\begin{IEEEkeywords}
Age-of-Latent (AoL), temporal window of integration (TWI), JEPA, semantic communication, NOMA.
\end{IEEEkeywords}

\section{Introduction}
Sixth-generation (6G) networks are expected to support AI-native edge intelligence, in which wireless links enable inference, control, and real-time decision-making in dynamic physical environments. Reliable bit delivery remains necessary in such systems. However, conventional communication metrics such as throughput, latency, and spectral efficiency do not directly measure whether the received information is useful for the downstream task. Perception-driven applications further require the receiver to maintain a timely and reliable task-relevant state, rather than merely reconstructing transmitted packets. Task-oriented and semantic communication therefore provide a promising paradigm for optimizing wireless transmission according to the task value of information, rather than solely pursuing bit-level reliability \cite{ref:goal_oriented_6g_goals,ref:freshness_to_semantics,ref:semcom_standardization_2025}.

The rise of physical AI is reshaping edge intelligence from content consumption toward continuous perception and real-time interaction with physical environments. In visual perception-based control applications such as warehouse robotics, industrial automation, and traffic management, distributed cameras provide visual observations for real-time scene understanding and decision-making. Recent industry reports indicate that AI-enabled services are expected to alter mobile traffic patterns by increasing uplink demand and latency sensitivity \cite{ref:nokia_physical_ai_ran_2026}. Transmitting raw observations over wireless links can therefore be resource-intensive, while reconstructing all source data may be unnecessary for task execution. A more efficient approach is to communicate compact task-oriented features that preserve information needed for collaborative perception \cite{ref:semantic_collaborative_perception_2025}. Recent works further show that predictive representations can model future system states in an abstract latent space \cite{ref:latent_multimodal_dynamics,ref:jepa_msac_2026,ref:wireless_jepa_2026}. In particular, joint embedding predictive architectures (JEPAs) learn such representations without reconstructing pixel-level observations \cite{ref:ijepa}, thereby providing a semantic-level interface for downstream control under limited communication resources \cite{ref:latent_multimodal_dynamics,ref:jepa_msac_2026}. JEPA-based representation learning has also been explored for spatio-temporal latent prediction from multi-antenna wireless signals \cite{ref:wireless_jepa_2026} and multiagent V2X perception \cite{ref:v2x_jepa_2026}. Beyond the spatio-temporal structure of wireless signals, multi-camera perception introduces cross-view correlation among cameras observing the same physical scene from different viewpoints. 
Such correlation enables the receiver to predict a camera's latent representation using information from other cameras.

Age-of-Information (AoI) provides a fundamental measure of information freshness by quantifying the time elapsed since the generation of the latest successfully received update \cite{ref:aoi_survey_yates}. However, packet-level freshness alone is insufficient when a base station (BS) maintains task-relevant latent beliefs through prediction. Although prediction can maintain a useful BS-side latent belief without a fresh update, it does not yield a newly decoded latent representation. Therefore, freshness should be characterized at the latent level. We refer to this metric as Age-of-Latent (AoL), which measures the freshness of the latest decoded latent representation.

Beyond latent freshness, semantic decision-making also requires the BS-side latent beliefs to remain temporally valid at the decision time. The temporal window of integration (TWI) defines a finite horizon for integrating distributed sensory updates and determines when the maintained beliefs are committed to the downstream task~\cite{ref:twi_physical_ai}. The TWI length determines the number of available update opportunities and the duration over which prediction may be required. These coupled effects motivate adaptive TWI selection based on latent freshness and prediction reliability.

Limited radio resources pose an additional challenge to maintaining low AoL within a finite TWI. Multiple cameras may need to upload latent representations before task commitment, while orthogonal access provides only a limited number of update opportunities. Non-orthogonal multiple access (NOMA) can increase update opportunities by allowing concurrent latent transmissions over shared time-frequency resources \cite{ref:noma_survey_2017}. The potential AoL reduction, however, depends on whether the superposed updates can be decoded under channel-dependent interference, power allocation, and successive interference cancellation (SIC). Therefore, TWI-aware AoL minimization must jointly account for camera scheduling and power control.

In this paper, we investigate adaptive TWI selection for joint AoL-resource minimization in semantic multi-camera wireless perception under limited uplink resources.
Selecting the TWI creates a tradeoff: a longer window provides more opportunities to refresh BS-side latent beliefs, but may increase the commit-time AoL of early decoded latent representations, prolong prediction during update unavailability, and incur additional transmission and encoder costs.
To address this problem, we develop a framework that selects the TWI at the task level based on AoL-resource cost and prediction uncertainty and performs slot-level resource control within the selected TWI.
The main contributions of this paper are summarized as follows:
\begin{itemize}
    \item We introduce an AoL-aware latent-belief model for semantic multi-camera wireless perception systems. To strike a balance between AoL and resource cost, the model exploits cross-view camera correlation and prediction uncertainty to maintain BS-side latent beliefs while avoiding excessive camera scheduling. 
    
    \item 
    We formulate a joint AoL-resource minimization problem that couples task-level adaptive TWI selection with slot-level encoder selection, camera scheduling, and NOMA power allocation within the selected TWI, subject to prediction-reliability constraints. The formulation captures the tradeoff between update opportunities before commitment and the resulting AoL-resource cost, while adaptive visual encoding avoids unnecessary camera-side computation.

    \item 
    We propose a correlation-aware latent prediction algorithm for AoL minimization (CoLA) with task-level TWI selection and slot-level resource control. Lyapunov optimization selects the TWI based on prediction uncertainty and AoL-resource cost, while PPO controls camera scheduling, encoder selection, and NOMA power allocation within the selected TWI.
    
    \item 
    Numerical results characterize adaptive TWI selection under limited radio resources and camera outages, together with the tradeoffs induced by fixed TWI policies.
    CoLA achieves the best AoL-resource tradeoff among the benchmarks while maintaining prediction reliability, with larger gains under prolonged and severe camera outages.
    The results further assess the contributions of camera correlation, adaptive encoding, and NOMA-based access.
\end{itemize}

The rest of the paper is organized as follows. Section \ref{Sec:Related_Works} reviews related work. Section \ref{Sec:System_Model} introduces the multi-camera wireless perception model. Section \ref{Sec:Problem_Formulation} formulates the joint AoL-resource minimization problem. Section \ref{Sec:Solution_Approach} presents the proposed CoLA algorithm. Section \ref{Sec:Numerical_Results} provides simulation results, and Section \ref{Sec:Conclusion} concludes the paper.

\section{Related Works} \label{Sec:Related_Works}
Information freshness has been widely used to couple wireless resource allocation with the timeliness of task inputs. Beyond conventional AoI, semantic- and goal-oriented formulations relate freshness to the impact of an update on the downstream task, rather than to the generation time alone \cite{ref:freshness_to_semantics,ref:aoii_semantic,ref:aoi_semantic_twc2026,ref:aoiv_semantic_tc2025,ref:aosi_wcnc2024}. The resulting formulations provide an important step beyond packet-level timeliness, but they typically rely on information that can be explicitly evaluated after reception, such as an observed source state or a task output. Visual perception systems operating on compact latent representations present a different setting. Recent latent-dynamics methods enable resource planning using predicted representations without reconstructing raw sensory observations \cite{ref:latent_multimodal_dynamics,ref:jepa_msac_2026,ref:wireless_jepa_2026}. Multi-camera perception further introduces cross-view correlation, through which the latent representation of one camera may be predicted using contextual information from other cameras observing the same scene. Nevertheless, existing works do not model the age evolution of BS-side latent beliefs maintained across missed camera updates. Prediction is therefore mainly used to reduce communication overhead, rather than to define a freshness process for such beliefs.

Timing-aware resource management has mainly relied on AoI-based sampling and scheduling to maintain fresh task inputs under communication constraints \cite{ref:aoi_survey_yates,ref:correlated_wiener_scheduling_2025,ref:distribution_aware_aoi_lqr_2026}. Prior AoI studies have incorporated source correlation into scheduling for multi-source monitoring systems \cite{ref:correlated_sources_aoi_2022} and wireless camera networks with overlapping fields of view \cite{ref:correlated_camera_aoi_2019}. For multisensory perception, freshness alone is insufficient because task inputs generated by different sources must remain temporally coherent at decision time. TWI formalizes this requirement as a finite validity horizon for admitting sensory updates at the network--application boundary \cite{ref:twi_physical_ai}. Existing TWI studies primarily determine a network-level validity horizon from path-delay and reliability models, while abstracting away the evolution of BS-side latent beliefs and the resource decisions that refresh them. In contrast, the present work considers online TWI selection while explicitly modeling how slot-level resource decisions affect AoL accumulation and the prediction uncertainty of BS-side latent beliefs at task commitment.

Freshness-aware multiple access has recently been studied in uplink NOMA systems, where concurrent transmissions are exploited to reduce the staleness of status updates under power, decoding, and resource-allocation constraints \cite{ref:vaoi_noma_2026,ref:transformer_aoi_noma_2026}. Semantic multiple access has also been investigated for heterogeneous semantic and bit-based users, including NOMA-enabled semantic communication and age-of-incorrect-information-oriented NOMA transmission for XR applications \cite{ref:semcom_noma_2025,ref:aoii_semantic_noma_xr_2023}. Existing studies mainly model freshness as an update-level process governed by scheduling and transmission decisions. Successful decoding improves packet or version freshness, but the BS does not maintain a latent belief through prediction, and the decision time is not governed by a finite integration window. The present work differs by linking NOMA decoding success to the evolution of AoL for BS-side latent beliefs, so that scheduling and power allocation are optimized with respect to the TWI used for downstream commitment.

\begin{figure*}[!t]
    \centering
    \includegraphics[width=1\textwidth]{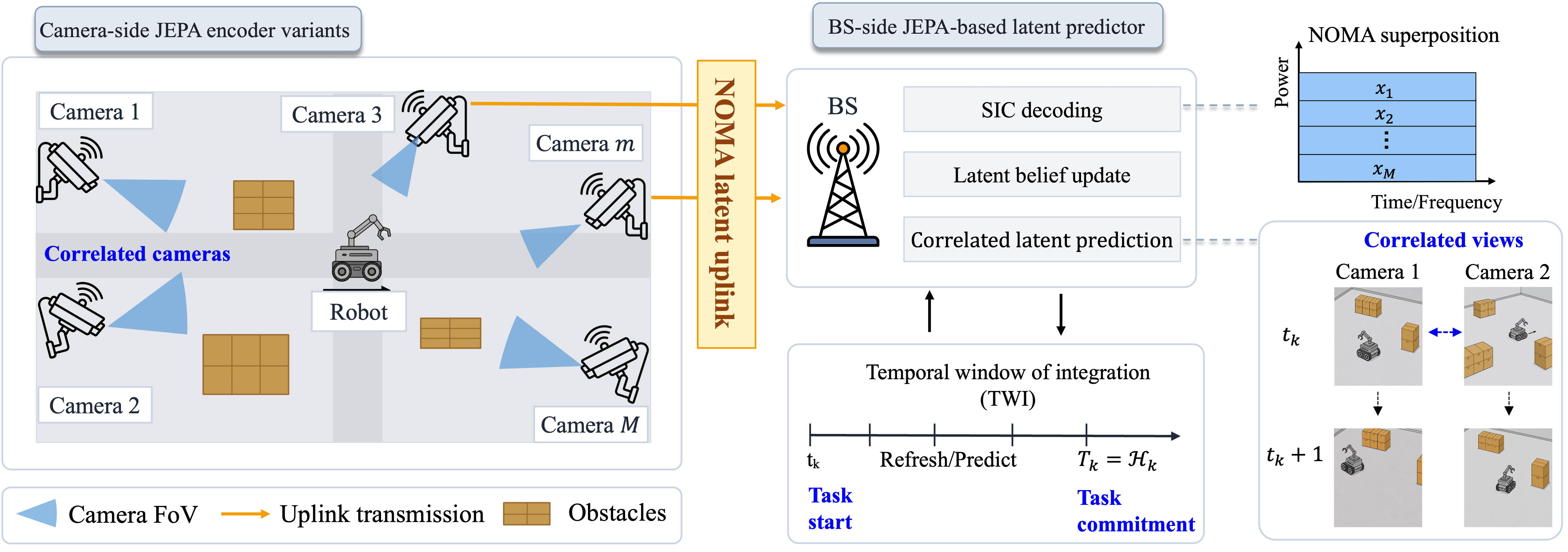}
    \caption{System model of the proposed multi-camera wireless perception framework.}
    \label{fig:system_model}
\end{figure*}

\section{System Model} \label{Sec:System_Model}
\subsection{Network Architecture}
We consider an uplink multi-camera wireless perception system, where a BS collects task-relevant latent representations from a set of correlated cameras $\mathcal{M}=\{1,\ldots,M\}$. The cameras monitor a common dynamic scene from different viewpoints, and their observations contain correlated visual information due to overlapping or complementary spatial coverage. At slot $t$, camera $m\in\mathcal{M}$ obtains an observation $X_m(t)$. Rather than uploading raw observations, a scheduled camera uses a JEPA encoder to generate a latent representation and then transmits it to the BS.
Camera-side encoding supports multiple computation--fidelity tradeoffs. Let $\mathcal{V}=\{0,\ldots,V-1\}$ denote the set of available encoder variants. When variant $v_m(t)\in\mathcal{V}$ is selected for camera $m$, the generated latent representation is
\begin{equation}
    Z_m(t)=f_{v_m(t)}\bigl(X_m(t)\bigr),
\end{equation}
where $f_v(\cdot)$ denotes the JEPA encoder corresponding to variant $v$. The output of each encoder variant is aligned to a common $d_z$-dimensional latent space, allowing the BS to process representations from different variants consistently. Encoder variant $v$ incurs a camera-side encoding cost $c_v^{\mathrm{enc}}$.

The BS serves successive instances of a downstream perception task for scene understanding and decision-making. To support these instances, the BS maintains a latent belief for each camera. Let $\bar Z_m(t)$ denote the latest successfully decoded latent representation of camera $m$ at the BS, and let $\tilde Z_m(t)$ denote the maintained BS-side latent belief used for downstream perception. Successful decoding refreshes $\bar Z_m(t)$ and the corresponding latent belief. When no fresh latent is decoded from camera $m$, the BS maintains $\tilde Z_m(t)$ through latent prediction using the previous belief for camera $m$ and information from correlated cameras. The detailed prediction and reliability model is introduced in Section~\ref{Subsec:Correlated_Prediction}.

\subsection{Communication Model}
At each slot, the BS schedules a subset of cameras to upload their latent representations. Let $o_m(t)\in\{0,1\}$ denote the scheduling decision of camera $m\in\mathcal{M}$, and let
\begin{equation}
    \mathcal{S}(t)=\{m\in\mathcal{M}:o_m(t)=1\}
\end{equation}
be the scheduled camera set. For the scheduled set $\mathcal{S}(t)$, the BS assigns transmit power $p_m(t)\ge 0$ subject to
\begin{equation}
    \sum_{m\in\mathcal{S}(t)} p_m(t) \le P_{\max},
\end{equation}
where $P_{\max}$ is the uplink power budget. The scheduled cameras transmit over a common uplink time-frequency resource using NOMA. The received signal at the BS is modeled as
\begin{equation}
    y(t)=\sum_{m\in\mathcal{S}(t)} \sqrt{p_m(t)} h_m(t) x_m(t)+n(t),
\end{equation}
where $h_m(t)$ is the uplink channel coefficient from camera $m$ to the BS, $x_m(t)$ is the unit-power signal carrying the latent representation $Z_m(t)$, and $n(t)\sim\mathcal{CN}(0,\sigma^2)$ is additive white Gaussian noise.

The BS performs uplink NOMA SIC in descending channel-gain order, following the conventional uplink NOMA decoding model \cite{ref:sic_noma_2020,ref:uplink_noma_sic}. Hence, when camera $m\in\mathcal{S}(t)$ is decoded, the signals of scheduled cameras with stronger channel gains are canceled, while those with weaker channel gains remain as interference. The residual interference for decoding camera $m$ is given by
\begin{equation}
    I_m(t) =
    \sum_{j\in\mathcal{S}(t):\, |h_j(t)|^2<|h_m(t)|^2}
    p_j(t)|h_j(t)|^2 .
\end{equation}
Accordingly, the achievable rate of camera $m$ is expressed as
\begin{equation}
    R_m(t) =
    B\log_2\left(
    1+
    \frac{
        p_m(t)|h_m(t)|^2
    }{
        I_m(t)+\sigma^2
    }
    \right),
    \quad m\in\mathcal{S}(t),
\end{equation}
where $B$ denotes the system bandwidth.
A transmitted latent representation from camera $m$ is successfully decoded if its achievable rate meets the required rate threshold $R_{\mathrm{th}}$, i.e.,
\begin{equation}
    s_m(t)=\mathbf{1}\{R_m(t)\ge R_{\mathrm{th}}\},
    \quad m\in\mathcal{S}(t).
\end{equation}
For unscheduled cameras, we set $s_m(t)=0$. The binary variable $s_m(t)$ determines whether the latent representation generated by camera $m$ at slot $t$ can refresh the corresponding BS-side latent belief.

\subsection{AoL and TWI Model}
The BS stores the latest successfully decoded latent representation of each camera. Let $G_m(t)$ denote the generation time of the stored latent representation $\bar Z_m(t)$ of camera $m$. The stored latent representation and its generation time evolve as
\begin{equation}
    \bar Z_m(t)=
    \begin{cases}
        Z_m(t), & s_m(t)=1,\\
        \bar Z_m(t-1), & s_m(t)=0,
    \end{cases}
\end{equation}
and
\begin{equation}
    G_m(t)=
    \begin{cases}
        t, & s_m(t)=1,\\
        G_m(t-1), & s_m(t)=0.
    \end{cases}
\end{equation}
The AoL of camera $m$ measures the freshness of its latest successfully decoded latent representation and is defined as
\begin{equation}
    \Delta_m(t)=t-G_m(t).
\end{equation}
The AoL evolution can be expressed by
\begin{equation}
    \Delta_m(t)=
    \begin{cases}
        0, & s_m(t)=1,\\
        \Delta_m(t-1)+1, & s_m(t)=0.
    \end{cases}
    \label{eq:aol_evolution}
\end{equation}
Hence, AoL is reset only when a fresh latent from camera $m$ is successfully decoded. If no fresh latent is decoded, the BS maintains the corresponding belief through look-ahead prediction. In this case, the BS does not update $\bar Z_m(t)$ or $G_m(t)$, and $\Delta_m(t)$ continues to increase. Although AoL follows the same temporal evolution as AoI, it tracks the age of the latest successfully decoded latent representation associated with each BS-side latent belief. The reliability of a belief maintained over consecutive look-ahead predictions is characterized by the uncertainty model in Section~\ref{Subsec:Correlated_Prediction}, following the latent-dynamics planning principle that prediction uncertainty accumulates when fresh latents are unavailable \cite{ref:latent_multimodal_dynamics}.

For the $k$-th task instance, the BS opens a temporal window of integration (TWI) at slot $t_k$ and selects a length $\tau_k\in\Omega$, where $\Omega$ is the feasible set of TWI lengths \cite{ref:twi_physical_ai}. Accordingly, $\tau_k$ determines the task-commitment slot and the available uplink communication opportunities. Since $\tau_k=0$ corresponds to no communication window and the task is served at slot $t_k$, we define the service length as
\begin{equation}
    L(\tau_k)=\max\{1,\tau_k\}.
\end{equation}
The corresponding service horizon is
\begin{equation}
    \mathcal{H}_k=\{t_k,\ldots,T_k\}, \quad T_k=t_k+L(\tau_k)-1,
\end{equation}
and the maintained BS-side latent beliefs are committed to the downstream task at slot $T_k$. The accumulated AoL over $\mathcal{H}_k$ is defined as
\begin{equation}
    C_k^{\mathrm{AoL}} =
    \sum_{t\in\mathcal{H}_k}
    \frac{1}{M}\sum_{m\in\mathcal{M}}\Delta_m(t).
    \label{eq:task_aol}
\end{equation}
A larger TWI provides more opportunities to decode fresh latents, but may also extend the prediction duration during update unavailability and accumulate additional AoL before commitment.

\subsection{Correlated Latent Prediction and Uncertainty Model}
\label{Subsec:Correlated_Prediction}
Cross-view redundancy enables the BS to exploit decoded latent representations and maintained beliefs from correlated cameras when a fresh latent from the target camera is unavailable. Let $\boldsymbol{\xi}_m(t)$ collect the AoL, prediction depth, belief mode, and availability metadata for camera $m$. Prediction depth is the number of consecutive predictions since the latest fresh decoding. Belief mode indicates whether the current belief is obtained through fresh decoding, prediction from the camera's previous belief, or prediction using correlated cameras. For camera $m$, let $\mathcal{D}_m(t-1)$ denote the tuple comprising its maintained belief and metadata available at the BS before slot $t$. Accordingly,
\begin{equation}
    \mathcal{D}_m(t-1)=\left(\tilde Z_m(t-1),\boldsymbol{\xi}_m(t-1)\right).
\end{equation}
After the decoding results at slot $t$ are obtained, the BS forms the information set $\mathcal{C}_m(t)$ for camera $m$. In $\mathcal{C}_m(t)$, $\boldsymbol{\xi}_j(t)$ is paired with $\bar Z_j(t)$ when $s_j(t)=1$ and with $\tilde Z_j(t-1)$ otherwise, where $j$ indexes a camera correlated with $m$. Formally,
\begin{equation}
    \begin{aligned}
    \mathcal{C}_m(t) = \Bigl\{&
    \left(
        s_j(t)\bar Z_j(t) +
        \bigl(1-s_j(t)\bigr)\tilde Z_j(t-1),
        \boldsymbol{\xi}_j(t)
    \right)
    \,\Bigm|\\
    &j\in\mathcal{M}\setminus\{m\}
    \Bigr\}.
    \end{aligned}
\end{equation}
When no fresh latent is decoded from camera $m$, the BS performs look-ahead latent prediction as
\begin{equation}
    \hat Z_m(t)=q_m\bigl(\mathcal{D}_m(t-1),\mathcal{C}_m(t)\bigr).
\end{equation}
Here, $q_m(\cdot)$ denotes an offline-trained BS-side look-ahead latent predictor that maps $\mathcal{D}_m(t-1)$ and $\mathcal{C}_m(t)$ into a predicted latent representation. Training samples are constructed from synchronized camera frames. The loss combines latent-space MSE to the corresponding fresh latent representation with a penalty when prediction increases the error of the input estimate. The maintained BS-side latent belief is then updated according to
\begin{equation}
    \tilde Z_m(t)=
    \begin{cases}
        \bar Z_m(t), & s_m(t)=1,\\
        \hat Z_m(t), & s_m(t)=0.
    \end{cases}
\end{equation}
Thus, fresh decoding and look-ahead prediction play different roles: the former refreshes the stored latent representation, while the latter maintains the BS-side belief when no fresh latent is decoded.

Consecutive look-ahead predictions can degrade the reliability of the maintained latent belief when fresh latents are unavailable. Let $u_m(t)$ denote the uncertainty associated with $\tilde Z_m(t)$. This quantity serves as a reliability surrogate for downstream commitment. During offline calibration, the normalized latent-space substitution error is defined as
\begin{equation}
    e_m^{\mathrm{sub}}(t) =
    \frac{1}{d_z}
    \left\|
    \tilde Z_m(t)-\bar Z_m^{\mathrm{fr}}(t)
    \right\|^2,
\end{equation}
where $d_z$ is the latent dimension and $\bar Z_m^{\mathrm{fr}}(t)$ denotes the fresh-reference latent representation generated from camera $m$'s observation at slot $t$ during offline calibration. Unlike $\bar Z_m(t)$, this reference does not depend on scheduling or successful decoding and is unavailable during online operation. The substitution error uses the same latent-space MSE as the predictor objective and serves as the target for uncertainty calibration. To estimate uncertainty, we use an ensemble of $E$ conditional predictors that share the architecture and inputs of the look-ahead predictor $q_m(\cdot)$ in (17). Each member is independently trained on a bootstrap resample of the same training set using the same objective. For the inputs $\mathcal{D}_m(t-1)$ and $\mathcal{C}_m(t)$, the $e$th predictor $q_m^{(e)}(\cdot)$ generates
\begin{equation}
    \hat Z_m^{(e)}(t)=q_m^{(e)}\bigl(\mathcal{D}_m(t-1),\mathcal{C}_m(t)\bigr),
    \quad e=1,\ldots,E.
\end{equation}
The ensemble average is
\begin{equation}
    \hat Z_m^{\mathrm{avg}}(t)=\frac{1}{E}\sum_{e=1}^{E}\hat Z_m^{(e)}(t).
\end{equation}
The variance across ensemble outputs in each latent dimension, normalized squared deviation from their average, and normalized squared difference between pairs of outputs are defined as
\begin{equation}
    \begin{aligned}
    v_{m,i}(t)&=\frac{1}{E}\sum_{e=1}^{E}
    \left([\hat Z_m^{(e)}(t)]_i-[\hat Z_m^{\mathrm{avg}}(t)]_i\right)^2,\\
    a_{m,e}(t)&=\frac{1}{d_z}\left\|\hat Z_m^{(e)}(t)-\hat Z_m^{\mathrm{avg}}(t)\right\|^2,\\
    p_{m,e,e'}(t)&=\frac{1}{d_z}\left\|\hat Z_m^{(e)}(t)-\hat Z_m^{(e')}(t)\right\|^2.
    \end{aligned}
\end{equation}
Here, $i=1,\ldots,d_z$ indexes the latent components. The condition $1\le e<e'\le E$ ensures that each predictor pair is considered once. The six ensemble-disagreement features are the mean and maximum of $v_{m,i}(t)$ over $i$, $a_{m,e}(t)$ over $e$, and $p_{m,e,e'}(t)$ over $e<e'$. The uncertainty feature vector $\phi_m(t)$ combines these ensemble-disagreement features with quantities derived from the target-camera metadata $\boldsymbol{\xi}_m(t)$ and aggregate measures of the correlated-camera inputs in $\mathcal{C}_m(t)$, including their number, AoL, prediction depth, belief mode, and availability. We train a heteroscedastic error predictor $g_\psi(\cdot)$ using $\phi_m(t)$ as input and $e_m^{\mathrm{sub}}(t)$ as the target, with Gaussian negative log-likelihood. It outputs
\begin{equation}
    \left(\hat\mu_m^{\mathrm{sub}}(t),\hat\sigma_{m,\mathrm{raw}}^{\mathrm{sub}}(t)\right)
    =g_\psi\bigl(\phi_m(t)\bigr).
\end{equation}
The raw standard deviation is then rescaled by an offline-fitted factor selected according to whether correlated-camera inputs are considered and their number, yielding $\hat\sigma_{m,\mathrm{cal}}^{\mathrm{sub}}(t)$. By Cantelli's one-sided inequality~\cite{ref:cantelli_1928}, the online uncertainty surrogate is then defined as
\begin{equation}
    u_m(t) = 
    \hat\mu_m^{\mathrm{sub}}(t) +
    \hat\sigma_{m,\mathrm{cal}}^{\mathrm{sub}}(t)
    \sqrt{\frac{1-\delta_m}{\delta_m}},
    \label{eq:uncertainty_surrogate}
\end{equation}
where $\delta_m\in(0,1)$ specifies the one-sided uncertainty level. Given a target uncertainty level $u_m^{\mathrm{tar}}$, the maintained belief of camera $m$ is admissible at slot $t$ if
\begin{equation}
    u_m(t)\le u_m^{\mathrm{tar}}.
\end{equation}
Here, $u_m^{\mathrm{tar}}$ denotes a fixed reliability threshold derived offline on the same substitution-error scale and is kept fixed during TWI selection and slot-level resource control.
AoI alone cannot distinguish BS-side latent beliefs with the same age but different reliability. The pair $(\Delta_m(t),u_m(t))$ captures this distinction because $\Delta_m(t)=\Delta_j(t)$ does not generally imply $u_m(t)=u_j(t)$ when the beliefs are maintained from different prediction histories or correlated-camera information.

\subsection{Problem Formulation}
\label{Sec:Problem_Formulation}
The encoder variants, latent predictors, and uncertainty model are trained offline and treated as fixed during online optimization.
The objective is to select the TWI for each task instance and jointly optimize camera scheduling, encoder-variant selection, and transmit-power allocation over the resulting communication window. Scheduling more cameras or selecting higher-cost encoder variants can improve the freshness and reliability of BS-side latent beliefs, at the cost of higher wireless and camera-side resource consumption.

Let $\mathcal{K}=\{1,\ldots,K\}$ index $K$ sequential instances of the downstream perception task. For each instance $k\in\mathcal{K}$, the AoL cost $C_k^{\mathrm{AoL}}$ is defined in \eqref{eq:task_aol}. The communication window associated with $\tau_k$ is defined as
\begin{equation}
    \mathcal{T}_k =
    \begin{cases}
        \emptyset, & \tau_k=0,\\
        \mathcal{H}_k, & \tau_k>0.
    \end{cases}
\end{equation}
The transmission cost and camera-side encoding cost over $\mathcal{T}_k$ are given by
\begin{equation}
    C_k^{\mathrm{tx}} =
    \sum_{t\in\mathcal{T}_k}
    \sum_{m\in\mathcal{M}}
    p_m(t),
\end{equation}
and
\begin{equation}
    C_k^{\mathrm{enc}} =
    \sum_{t\in\mathcal{T}_k}
    \sum_{m\in\mathcal{M}}
    o_m(t)c_{v_m(t)}^{\mathrm{enc}},
\end{equation}
respectively. The total resource cost is then defined as
\begin{equation}
    C_k^{\mathrm{res}} =
    C_k^{\mathrm{tx}} +
    \eta_{\mathrm{enc}} C_k^{\mathrm{enc}},
\end{equation}
where $\eta_{\mathrm{enc}}$ controls the relative importance of camera-side encoding cost.

Let $\omega_{\mathrm{AoL}}$ and $\omega_{\mathrm{res}}$ denote the nonnegative weights assigned to the AoL and resource costs, respectively. The finite-horizon joint optimization problem is formulated as
\begin{subequations}\label{P1}
\begin{align}
    \mathbf{P1}:\quad
    \min_{\mathcal{X}}\quad
    & \frac{1}{K}\sum_{k\in\mathcal{K}}
    \mathbb{E}\!\left[
    \omega_{\mathrm{AoL}} C_k^{\mathrm{AoL}} + \omega_{\mathrm{res}} C_k^{\mathrm{res}}
    \right] \label{P1a}\\
    \text{s.t.}\quad
    & u_m(T_k)\le u_m^{\mathrm{tar}}, \forall (m,k)\in\mathcal{M}\times\mathcal{K},
    \label{P1b} \\
    & \sum_{m\in\mathcal{M}} p_m(t)\le P_{\max}, \forall k\in\mathcal{K},\; t\in\mathcal{T}_k,
    \label{P1c}\\
    & \tau_k\in\Omega, \forall k\in\mathcal{K}.
    \label{P1d}
\end{align}
\end{subequations}
In optimization problem~\eqref{P1}, the decision-variable set is $ \mathcal{X} = \{\tau_k:k\in\mathcal{K}\} \cup \{o_m(t),v_m(t),p_m(t): (m,k)\in\mathcal{M}\times\mathcal{K},\,t\in\mathcal{T}_k\}.$ When $o_m(t)=1$, the encoder variant satisfies $v_m(t)\in\mathcal{V}$; when $o_m(t)=0$, no encoder variant is applied and $p_m(t)=0$. Constraint \eqref{P1b} ensures that the BS-side latent beliefs committed to downstream perception satisfy the target uncertainty level. Constraint \eqref{P1c} imposes the per-slot uplink power budget during the communication window of each task. Constraint \eqref{P1d} restricts the TWI length to the feasible set. The expectation in \eqref{P1} accounts for the stochastic channel evolution and the decoding results, which determine whether latent beliefs are refreshed or maintained by prediction.

\begin{proposition}
Problem $\mathbf{P1}$ is NP-hard.
\end{proposition}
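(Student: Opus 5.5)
We prove NP-hardness by restriction: we exhibit a special case of $\mathbf{P1}$ that is already NP-hard, so the general problem is at least as hard. The natural candidate is the slot-level subproblem. Fix $K=1$ and $\Omega=\{1\}$, so $\tau_1=1$ and $\mathcal{H}_1=\mathcal{T}_1=\{t_1\}$. Take deterministic channels, so the expectation disappears. Set $\omega_{\mathrm{AoL}}=1$ and $\omega_{\mathrm{res}}=0$, or keep $\omega_{\mathrm{res}}$ small with zero encoder costs. Relax the reliability constraint \eqref{P1b} by choosing $u_m^{\mathrm{tar}}$ large enough that it is always met.

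In this instance $C_1^{\mathrm{AoL}}=\frac{1}{M}\sum_m \Delta_m(t_1)$. By \eqref{eq:aol_evolution}, $\Delta_m(t_1)=0$ if $s_m(t_1)=1$ and $\Delta_m(t_1-1)+1$ otherwise. Choosing the initial ages $a_m:=\Delta_m(t_1-1)+1$ freely, the problem becomes: maximize $\sum_m a_m s_m$. The maximization is over the scheduled set $\mathcal{S}$ and the powers $p_m\ge 0$, subject to $\sum_{m\in\mathcal{S}}p_m\le P_{\max}$ and the SIC decoding condition $R_m\ge R_{\mathrm{th}}$ for every $m$ counted as decoded. Unscheduled or undecoded cameras are harmless, because any camera that is scheduled but not decoded can be unscheduled to free its power and remove its interference.

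We then reduce from the \textsc{Knapsack} decision problem. This works because of the structure of SIC under a fixed decoding order. Let $\gamma=2^{R_{\mathrm{th}}/B}-1$ and sort the chosen cameras by gain, $|h_{(1)}|^2>\dots>|h_{(n)}|^2$. Camera $(i)$ is decoded iff
\[
p_{(i)}|h_{(i)}|^2\ge\gamma\Bigl(\sum_{j>i}p_{(j)}|h_{(j)}|^2+\sigma^2\Bigr).
\]
Setting all constraints tight, the weakest user gets received power $\gamma\sigma^2$, and each stronger one gets $\gamma(1+\gamma)^{n-i}\sigma^2$ in received power. The minimum total transmit power is therefore
\[
\sum_i \gamma(1+\gamma)^{n-i}\sigma^2/|h_{(i)}|^2 .
\]
Given a knapsack instance with items of weight $w_m$, value $a_m$, and capacity $W$, we choose $\gamma$ tiny (i.e., $R_{\mathrm{th}}$ small), so that $(1+\gamma)^{n-i}$ lies in $[1,1+\epsilon]$. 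We also choose gains $|h_m|^2=\gamma\sigma^2/w_m$, after breaking ties by an infinitesimal perturbation. Then the power needed by any set $\mathcal{S}$ is $\sum_{m\in\mathcal{S}}w_m\cdot(1+O(n\gamma))$. With integer weights and $P_{\max}=W+1/2$, a set is feasible iff $\sum_{m\in\mathcal{S}}w_m\le W$, provided $n\gamma W<1/2$. Hence an AoL value $\le \frac{1}{M}\bigl(\sum_m a_m - A\bigr)$ is achievable iff some knapsack subset has value $\ge A$. The construction is polynomial, since $\gamma$ can be written with polynomially many bits.

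The main obstacle is this rounding step. We must make the multiplicative SIC interference factor $(1+\gamma)^{n-i}$ negligible while keeping all numbers of polynomial size, and we must handle the order dependence on channel gains. As a fallback, if the precision argument becomes awkward, we would instead cite the known NP-hardness of uplink NOMA user selection/power control for maximizing the weighted number of decoded users under a sum-power budget, and embed it as above. Finally, we note that the general $\mathbf{P1}$ additionally couples TWI selection, encoder choice, and stochastic dynamics. It contains this instance as a special case, and it is therefore NP-hard.
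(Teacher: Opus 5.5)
Your argument is correct, and it takes a genuinely different route from the paper.

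\textbf{The paper's route.} It reduces from Set Cover, using $K=1$, $\Omega=\{1\}$, deterministic channels and $\omega_{\mathrm{AoL}}=0$. All the hardness sits in the reliability constraint \eqref{P1b}. Each subset becomes a unit-cost camera and each element becomes a target belief. The uncertainty model is then \emph{stipulated} so that $u_i(T_1)\le u_i^{\mathrm{tar}}$ holds iff some scheduled camera covering $e_i$ is decoded.

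\textbf{Your route.} You do the reverse: you switch off \eqref{P1b} and the resource term. The hardness then comes from the AoL recursion together with the explicit NOMA layer (descending-gain SIC, rate threshold, sum-power budget). You capture this through the sandwich $\sum_{m\in\mathcal{S}}w_m\le P_{\min}(\mathcal{S})\le(1+\gamma)^{|\mathcal{S}|-1}\sum_{m\in\mathcal{S}}w_m$ and a Knapsack embedding.

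\textbf{What each buys.} The paper's reduction is combinatorially immediate. Since Set Cover carries no large numbers, it yields NP-hardness in the strong sense. However, it relies on the learned uncertainty surrogate (an offline-trained ensemble plus calibrator) being able to realize an arbitrary coverage predicate. It also does not explain how transmit power enters the ``unit'' resource cost, or why all chosen cameras can be decoded under SIC and $P_{\max}$. Your reduction makes no assumption on the learned models. It also shows that the single-slot scheduling and power subproblem delegated to PPO is itself hard. The price is only weak NP-hardness, since values and weights enter in binary as initial ages, channel gains and $P_{\max}$, plus some numerical care.

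\textbf{Two details to tidy.}
\begin{enumerate}
\item The tie-breaking perturbation is unnecessary. Under the paper's strict-inequality interference set, equal-gain cameras do not interfere with each other, so both sides of the sandwich still hold without it.
\item The sentence claiming $\gamma$ has a short encoding should be replaced. $\gamma=2^{R_{\mathrm{th}}/B}-1$ is irrational for generic rational $R_{\mathrm{th}}/B$, so instead:
\begin{itemize}
\item fix $R_{\mathrm{th}}/B=1/N$ with an integer $N=\Theta(MW)$;
\item take rational gains proportional to $1/w_m$;
\item choose a rational $P_{\max}$ strictly inside the gap between the minimal powers of sets of weight at most $W$ and of sets of weight at least $W+1$.
\end{itemize}
That gap is at least half the per-unit-weight power, so approximating $\gamma$ to $O(\log(MW))$ bits suffices.
\end{enumerate}
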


\begin{figure}[!t]
    \centering
    \includegraphics[width=3.3in]{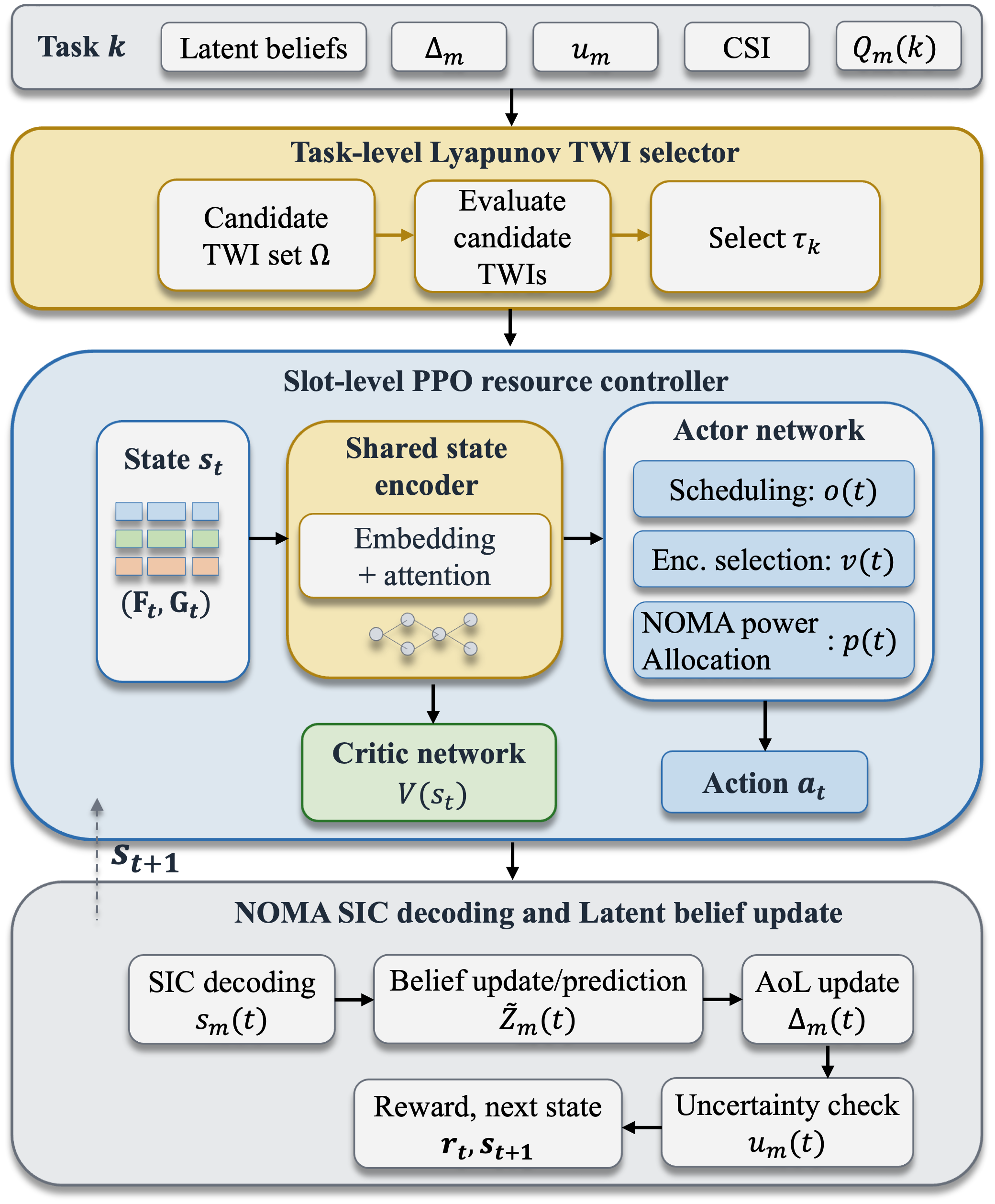}
    \caption{Architecture of the proposed CoLA algorithm with task-level TWI selection and slot-level PPO resource control.}
    \label{fig:CoLA_framework}
\end{figure}

{\renewcommand{\IEEEproofindentspace}{0pt}
\begin{IEEEproof}
We prove the result by reduction from the Set Cover problem \cite{Karp1972}. Consider an arbitrary Set Cover instance with a universe $\mathcal{U}=\{e_1,\ldots,e_N\}$, a collection of subsets $\{\mathcal{S}_1,\ldots,\mathcal{S}_L\}$, and a budget $B$. We construct a special case of $\mathbf{P1}$ with a single task instance, i.e., $K=1$, and a fixed TWI length, i.e., $\Omega=\{1\}$. The wireless channels are deterministic, and the AoL component in the objective is removed by setting $\omega_{\mathrm{AoL}}=0$.
For each subset $\mathcal{S}_j$, we create one schedulable camera. Scheduling this camera corresponds to selecting subset $\mathcal{S}_j$ and incurs one unit of resource cost. For each element $e_i\in\mathcal{U}$, we create one target latent belief whose uncertainty constraint must be satisfied at the task-commitment slot. The uncertainty model is set such that $u_i(T_1)\le u_i^{\mathrm{tar}}$ holds if and only if at least one scheduled camera corresponding to a subset $\mathcal{S}_j$ with $e_i\in\mathcal{S}_j$ provides a decoded latent representation for predicting that belief. Otherwise, the uncertainty constraint is violated.
Under this construction, satisfying all uncertainty constraints in \eqref{P1b} is equivalent to covering every element in $\mathcal{U}$. Moreover, the resource cost in the objective equals the number of selected subsets. Hence, there exists a feasible solution to this special case of $\mathbf{P1}$ with objective value no larger than $B$ if and only if the original Set Cover instance admits a cover of size at most $B$.
The construction introduces $L$ schedulable cameras and $N$ target latent beliefs, with the uncertainty conditions specified directly from the incidences $e_i\in\mathcal{S}_j$. It can therefore be completed in time polynomial in the size of the Set Cover instance. Thus, Set Cover polynomially reduces to the decision version of $\mathbf{P1}$, and the optimization problem $\mathbf{P1}$ is NP-hard.
\end{IEEEproof}}

\section{Solution Approach: CoLA}
\label{Sec:Solution_Approach}
Since $\mathbf{P1}$ is NP-hard, directly optimizing $\mathcal{X}$ over all tasks is computationally intractable. We decompose the problem according to the natural task-level/communication-slot hierarchy of the system. At the task level, CoLA selects the TWI length by balancing the estimated AoL-resource cost and the reliability violation at the task-commitment slot. Within the selected TWI, the slot-level controller determines camera scheduling, encoder-variant selection, and NOMA power allocation. 

\subsection{Slot-Level Resource Controller}
\label{Subsec:Slot_Level_Control}
Within $\mathcal{T}_k$, we formulate slot-level resource control as a finite-horizon MDP, where the SIC decoding indicators determine whether each BS-side latent belief is refreshed by a decoded latent representation or maintained through look-ahead prediction.
\begin{itemize}
    \item \textbf{State:} The state $\mathbf{s}_t$ consists of a per-camera feature matrix $\mathbf{F}_t\in\mathbb{R}^{M\times d_{\mathrm{f}}}$ and a global feature vector $\mathbf{G}_t\in\mathbb{R}^{d_{\mathrm{g}}}$, where $d_{\mathrm{f}}$ and $d_{\mathrm{g}}$ denote the corresponding feature dimensions. For each camera $m$, $\mathbf{F}_t$ includes the AoL $\Delta_m(t)$, prediction uncertainty $u_m(t)$, channel gain $|h_m(t)|^2$, the correlated-camera context feature, and the encoding cost associated with the maintained belief. $\mathbf{G}_t$ includes the selected TWI length $\tau_k$, the remaining slots before $T_k$, the transmission and decoding status accumulated within $\mathcal{T}_k$, and the encoder-cost profile $\{c_v^{\mathrm{enc}}\}_{v\in\mathcal{V}}$.

    \item \textbf{Action:} The action $\mathbf{a}_t=(\mathbf{o}(t),\mathbf{v}(t),\mathbf{p}(t))$ consists of the scheduling, encoder-variant selection, and power-allocation decisions at slot $t$, where $\mathbf{o}(t)$, $\mathbf{v}(t)$, and $\mathbf{p}(t)$ collect $\{o_m(t)\}_{m\in\mathcal{M}}$, $\{v_m(t)\}_{m\in\mathcal{M}}$, and $\{p_m(t)\}_{m\in\mathcal{M}}$, respectively. The action is generated hierarchically: the controller first determines the scheduled camera set, and then determines the encoder-variant and NOMA power-allocation decisions conditioned on the scheduled set.
    
    \item \textbf{Reward:} The reward $\mathbf{r}_t$ is designed to penalize the AoL and resource costs induced by the current slot-level action. A commitment-slot penalty is applied when the BS-side latent beliefs violate the uncertainty requirement at $T_k$. Let $c^{\mathrm{AoL}}(t) = \frac{1}{M}\sum_{m\in\mathcal{M}}\Delta_m(t)$ denote the average AoL at slot $t$, and let $c^{\mathrm{res}}(t) = \sum_{m\in\mathcal{M}}p_m(t) + \eta_{\mathrm{enc}}\sum_{m\in\mathcal{M}}o_m(t)c_{v_m(t)}^{\mathrm{enc}}$ denote the slot-level resource cost. Let $\chi_k(t)=\mathbf{1}\{t=T_k\}$ indicate whether slot $t$ is the task-commitment slot. The reward is given by
    \begin{equation}
        \label{Eq:Slot_Reward}
        \mathbf{r}_t = - \omega_{\mathrm{AoL}}c^{\mathrm{AoL}}(t) -\omega_{\mathrm{res}}c^{\mathrm{res}}(t) - \chi_k(t)\lambda_{\mathrm{rel}}D_k^{\mathrm{rel}},
    \end{equation}
    where $\lambda_{\mathrm{rel}}\ge 0$ is the reliability penalty weight and $D_k^{\mathrm{rel}}$ is the uncertainty-violation cost. The last term corresponds to a Lagrangian relaxation of the uncertainty constraint in \eqref{P1b}. Since this constraint is imposed at task commitment, the reliability penalty is activated only when $t=T_k$. Specifically, we define
    \begin{equation}
        \label{Eq:Rel_Violation_Cost}
        D_k^{\mathrm{rel}} =
        \frac{V_k^{\mathrm{rel}}}{1+V_k^{\mathrm{rel}}},
        \quad
        V_k^{\mathrm{rel}} =
        \max_{m\in\mathcal{M}}
        \frac{\left[u_m(T_k)-u_m^{\mathrm{tar}}\right]^+}
        {u_m^{\mathrm{tar}}+\epsilon},
    \end{equation}
    where $\epsilon>0$ is a small constant. Here, $[x]^+=\max\{x,0\}$ denotes the positive-part operator.
\end{itemize}

\begin{algorithm}[t]
\caption{Communication-Slot-Level Controller Algorithm}
\label{Alg:Inner_PPO}
\begin{algorithmic}[1]
\STATE \textbf{Input:} Task sequences, initial actor $\pi_{\theta}$, and critic $V_{\phi}$.
\STATE \textbf{Initialize:} Rollout buffer $\mathcal{B}$ and reliability weight $\lambda_{\mathrm{rel}}$.
\FOR{each training episode}
    \STATE Clear $\mathcal{B}$.
    \FOR{each task $k$ in the episode}
        \FOR{each communication slot $t\in\mathcal{T}_k$}
            \STATE Construct $\mathbf{s}_t=(\mathbf{F}_t,\mathbf{G}_t)$.
            \STATE Sample $\mathbf{a}_t$ from $\pi_{\theta}(\cdot|\mathbf{s}_t)$ using \eqref{Eq:Policy_Factorization}.
            \STATE Execute $\mathbf{a}_t$, compute $\mathbf{r}_t$ using \eqref{Eq:Slot_Reward}, and store the transition in $\mathcal{B}$.
        \ENDFOR
    \ENDFOR
    \STATE Compute $\{\hat R_t,\hat A_t\}$ from $\mathcal{B}$.
    \FOR{each PPO epoch}
        \FOR{each mini-batch sampled from $\mathcal{B}$}
            \STATE Compute $\rho_t(\theta)$ using \eqref{Eq:PPO_Ratio}.
            \STATE Update $\theta$ and $\phi$ by minimizing \eqref{Eq:PPO_Total_Loss}.
        \ENDFOR
    \ENDFOR
    \STATE Update $\lambda_{\mathrm{rel}}$ using task-commitment violation statistics.
\ENDFOR
\end{algorithmic}
\end{algorithm}

We adopt an actor-critic architecture for the slot-level controller. To handle the high-dimensional heterogeneous slot state, a shared state encoder maps $\mathbf{s}_t$ into a compact representation via feature embedding and an attention mechanism. The actor network generates the hierarchical resource-control action by first scheduling cameras and then selecting encoder variants and NOMA transmit powers for the scheduled cameras. The critic network uses the same encoded representation to estimate the value function $V_{\phi}(\mathbf{s}_t)$. The resulting hierarchical policy is represented by the joint log-probability as
\begin{equation}
\label{Eq:Policy_Factorization}
\begin{aligned}
\log \pi_{\theta}(\mathbf{a}_t|\mathbf{s}_t) ={}&
\log \pi_{\theta}^{\mathrm{sch}}(\mathbf{o}(t)|\mathbf{s}_t) +
\log \pi_{\theta}^{\mathrm{enc}}(\mathbf{v}(t)|\mathbf{o}(t),\mathbf{s}_t) \\
&+\log \pi_{\theta}^{\mathrm{pow}}(\mathbf{p}(t)|\mathbf{o}(t),\mathbf{s}_t),
\end{aligned}
\end{equation}
where $\pi_{\theta}^{\mathrm{sch}}$, $\pi_{\theta}^{\mathrm{enc}}$, and $\pi_{\theta}^{\mathrm{pow}}$ denote the conditional scheduling, encoder-variant selection, and power-allocation distributions induced by the actor, respectively. The slot-level policy is trained using the proximal policy optimization (PPO) framework. Let
\begin{equation}
    \label{Eq:PPO_Ratio}
    \rho_t(\theta) =
    \exp\left(
    \log \pi_{\theta}(\mathbf{a}_t|\mathbf{s}_t) -
    \log \pi_{\theta_{\mathrm{old}}}(\mathbf{a}_t|\mathbf{s}_t)
    \right)
\end{equation}
denote the likelihood ratio between the current and previous policies. The clipped surrogate objective is given by
\begin{equation}
    \label{Eq:PPO_Objective}
    \begin{aligned}
        \mathcal{L}^{\mathrm{PPO}}(\theta) = 
        & \mathbb{E}_t \Big[
        \min\big(
        \rho_t(\theta)\hat A_t, \\
        &\operatorname{clip}\big(\rho_t(\theta),1-\epsilon_{\mathrm{clip}},1+\epsilon_{\mathrm{clip}}\big)\hat A_t
        \big)
        \Big],
    \end{aligned}
\end{equation}
where $\epsilon_{\mathrm{clip}}>0$ is the clipping parameter and $\hat A_t$ is the advantage estimate computed from the Lagrangian-augmented reward. Let $V_{\phi}(\mathbf{s}_t)$ denote the value function and let $\hat R_t$ denote the value target. The critic loss is defined as
\begin{equation}
    \label{Eq:Value_Loss}
    \mathcal{L}^{\mathrm{V}}(\phi) =
    \mathbb{E}_t
    \left[
    \left(
    V_{\phi}(\mathbf{s}_t)-\hat R_t
    \right)^2
    \right].
\end{equation}
The entropy regularization, denoted by $\mathcal{L}^{\mathrm{H}}(\theta)$, is the expected sum of the entropies of the scheduling, encoder-variant-selection, and power-allocation distributions.
The actor and critic parameters are updated by minimizing
\begin{equation}
    \label{Eq:PPO_Total_Loss}
    \mathcal{L}^{\mathrm{total}}(\theta,\phi) = -\mathcal{L}^{\mathrm{PPO}}(\theta) + c_{\mathrm{V}}\mathcal{L}^{\mathrm{V}}(\phi) - c_{\mathrm{H}}\mathcal{L}^{\mathrm{H}}(\theta),
\end{equation}
where $c_{\mathrm{V}}$ and $c_{\mathrm{H}}$ are nonnegative weights. Algorithm~\ref{Alg:Inner_PPO} summarizes the PPO training procedure for the communication-slot-level controller.

\begin{algorithm}[t]
\caption{CoLA: Correlation-aware Latent Prediction for AoL Minimization}
\label{Alg:CoLA}
\begin{algorithmic}[1]
\STATE \textbf{Input:} Candidate TWI set $\Omega$ and the trained slot-level controller $\pi_{\theta}$.
\STATE \textbf{Initialize:} $Q_m(1)=0$, $\forall m\in\mathcal{M}$.
\FOR{each task $k$}
    \STATE Observe the current latent beliefs, channels, and queues $\{Q_m(k)\}_{m\in\mathcal{M}}$.
    \FOR{each candidate $\tau\in\Omega$}
        \STATE Estimate the score in \eqref{Eq:TWI_Score} under $\pi_{\theta}$.
    \ENDFOR
    \STATE Select $\tau_k$ according to \eqref{Eq:TWI_Score}.
    \STATE Execute the selected TWI using $\pi_{\theta}$ and update the BS-side latent beliefs.
    \STATE Compute $\{\tilde d_m(k)\}_{m\in\mathcal{M}}$ and update the queues according to \eqref{Eq:Queue_Update}.
\ENDFOR
\end{algorithmic}
\end{algorithm}

\subsection{TWI Selection via Lyapunov Optimization}
\label{Subsec:TWI_Selection}
After obtaining the slot-level resource controller, CoLA determines the TWI length through task-level Lyapunov optimization. The uncertainty constraint in \eqref{P1b} is imposed on the BS-side latent beliefs committed to downstream inference and thus couples the reliability performance across successive task instances. To track this constraint violation pressure, we maintain a virtual reliability queue for each camera $m$. Define the normalized reliability residual at the commitment slot of task $k$ as
\begin{equation}
    \label{Eq:Rel_Residual}
    d_m(k) =
    \frac{u_m(T_k)-u_m^{\mathrm{tar}}}{u_m^{\mathrm{tar}}+\epsilon},
\end{equation}
where $\epsilon>0$ is a small constant. Let $\tilde d_m(k)$ denote a bounded version of $d_m(k)$. The virtual reliability queue is updated after task $k$ as
\begin{equation}
    \label{Eq:Queue_Update}
    Q_m(k+1) =
    \left[
    Q_m(k)+\eta_Q\tilde d_m(k)
    \right]^+,
    \quad m\in\mathcal{M},
\end{equation}
where $\eta_Q$ is the queue update stepsize.

We evaluate each candidate $\tau\in\Omega$ by estimating the resulting AoL-resource cost and uncertainty-constraint violation from the current system state. $\tau=0$ represents a no-transmission decision, which avoids transmission and encoder costs but relies on the maintained BS-side latent beliefs at commitment. Define $G_k\triangleq\sum_{m\in\mathcal{M}}[\tilde d_m(k)]^+$ as the positive reliability-violation cost. Let $\widehat{C}_{k}^{\mathrm{AoL}}(\tau)$, $\widehat{C}_{k}^{\mathrm{res}}(\tau)$, $\widehat{\tilde d}_{m,k}(\tau)$, and $\widehat{G}_{k}(\tau)$ denote the candidate-dependent estimates of $C_k^{\mathrm{AoL}}$, $C_k^{\mathrm{res}}$, $\tilde d_m(k)$, and $G_k$, respectively.
The drift-plus-penalty TWI selection rule is given by
\begin{equation}
\label{Eq:TWI_Score}
\begin{aligned}
    \tau_k = 
    \arg\min_{\tau\in\Omega}
    \Big[
    &V_{\mathrm{L}}
    \big(
    \omega_{\mathrm{AoL}}\widehat{C}_{k}^{\mathrm{AoL}}(\tau) + 
    \omega_{\mathrm{res}}\widehat{C}_{k}^{\mathrm{res}}(\tau)
    \big) \\
    &+\sum_{m\in\mathcal{M}}Q_m(k)\widehat{\tilde d}_{m,k}(\tau) + 
    \lambda_0\widehat{G}_{k}(\tau)
    \Big],
\end{aligned}
\end{equation}
where $V_{\mathrm{L}}>0$ controls the weight of the AoL-resource objective, and $\lambda_0\ge 0$ is a base penalty for predicted reliability violation. After selecting $\tau_k$, CoLA applies the controller over $\mathcal{T}_k$, computes $\{\tilde d_m(k)\}_{m\in\mathcal{M}}$ from $\{u_m(T_k)\}_{m\in\mathcal{M}}$, and updates the virtual queues according to \eqref{Eq:Queue_Update}. Algorithm~\ref{Alg:CoLA} summarizes the overall CoLA procedure.

\subsection{Complexity Analysis}
\label{Subsec:Complexity_Analysis}
The slot-level resource controller in Algorithm~\ref{Alg:Inner_PPO} is trained offline. Hence, the online computational complexity comes from the task-level TWI selection and the controller execution in Algorithm~\ref{Alg:CoLA}. Let $\mathcal{C}_{\mathrm{in}}$ denote the computational complexity of one communication-slot update, which includes forming the slot state, inferring the resource-control action, computing the SIC decoding indicators, and updating the BS-side latent belief and uncertainties.

For each task $k$, Algorithm~\ref{Alg:CoLA} first evaluates the score terms in \eqref{Eq:TWI_Score} for all candidate TWI lengths. For each positive candidate $\tau>0$, the corresponding estimate involves $\tau$ communication-slot updates. The case $\tau=0$ does not invoke the slot-level controller because it represents no communication window. Hence, the candidate-evaluation cost is $\mathcal{O}((\sum_{\tau\in\Omega\setminus\{0\}}\tau)\mathcal{C}_{\mathrm{in}})$. Once the candidate estimates are obtained, forming \eqref{Eq:TWI_Score} for all candidates requires $\mathcal{O}(|\Omega|M)$ operations because the queue term sums over the $M$ cameras for each candidate. The minimization over $\Omega$ and the queue update in \eqref{Eq:Queue_Update} are absorbed into the same order. Executing the selected TWI adds $\mathcal{O}(\tau_k\mathcal{C}_{\mathrm{in}})$ complexity. Combining the candidate evaluation, score formation, selected-TWI execution, and queue update, the per-task online complexity of Algorithm~\ref{Alg:CoLA} is bounded by $\mathcal{O}(|\Omega|(\tau_{\max}\mathcal{C}_{\mathrm{in}}+M))$, where $\tau_{\max}=\max_{\tau\in\Omega}\tau$. For $K$ task instances, the overall online complexity is $\mathcal{O}(K|\Omega|(\tau_{\max}\mathcal{C}_{\mathrm{in}}+M))$.

\begin{figure*}[!t]
    \centering
    \includegraphics[width=0.85\textwidth]{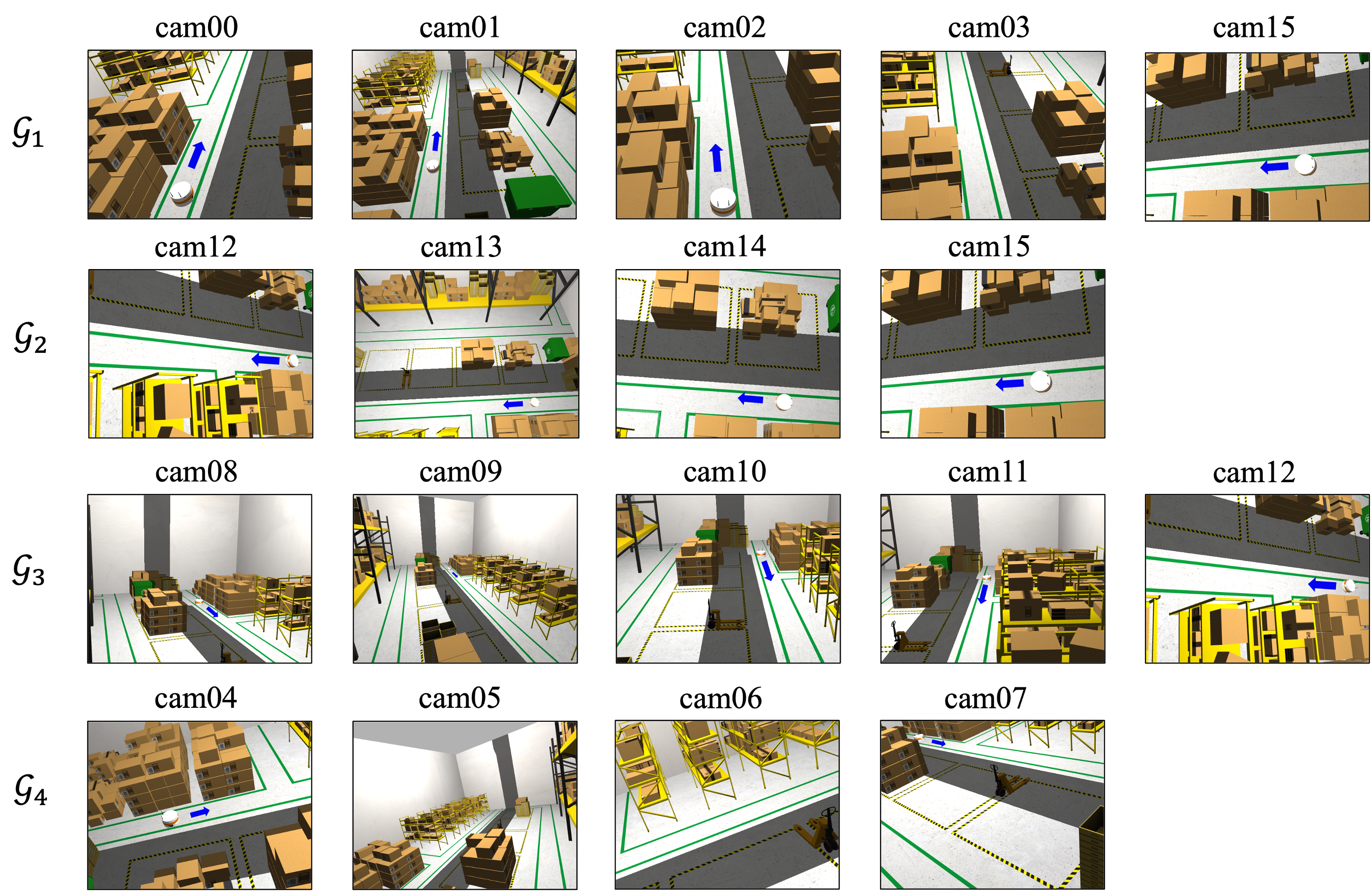}
    \caption{Representative group-wise synchronized camera views from the Warehouse MultiCam RF Dataset~\cite{ref:warehouse_dataset_zenodo}. All views are taken from frame \textit{64}, and the blue arrows indicate the robot moving direction. Rows correspond to the correlated camera groups used in the simulations, with repeated cameras reflecting overlapping group definitions.}
    \label{fig:warehouse_camera_scenario}
\end{figure*}

\begin{table}[t]
    \centering
    \caption{Simulation Parameters}
    \label{tab:simulation_parameters}
    \footnotesize
    \setlength{\tabcolsep}{3pt}
    \begin{tabular}{|l|c|}
        \hline
        \textbf{System Parameter} & \textbf{Value} \\ \hline
        Number of Cameras ($M$) & 16 \\ \hline
        Number of Camera Groups & 4 \\ \hline
        Number of Encoder Variants ($V$) & 9 \\ \hline
        Latent Dimension & 256 \\ \hline
        TWI Candidate Set ($\Omega$) & $\{0,1,2,4,8,16,32\}$ \\ \hline
        Rate Threshold ($R_{\mathrm{th}}$) & 0.75 \\ \hline
        Normalized Max. Power ($P_{\max}$) & 1 \\ \hline
        Normalized Bandwidth ($B$) & 1 \\ \hline
        AoL-Resource Weights ($\omega_{\mathrm{AoL}},\omega_{\mathrm{res}}$) & $(0.5, 0.5)$ \\ \hline
        \hline\hline
        \textbf{Offline Latent-Model Parameter} & \textbf{Value} \\ \hline
        JEPA Batch Size & $2$ \\ \hline
        JEPA Training Steps & $500$ \\ \hline
        JEPA Learning Rate & $10^{-3}$ \\ \hline
        Predictor History Length & 1 \\ \hline
        Predictor Training Epochs & $50$ \\ \hline
        Predictor Learning Rate & $5\times10^{-4}$ \\ \hline
        Cantelli Parameter ($\delta_m$) & $0.1$ \\ \hline
        \hline\hline
        \textbf{Lyapunov Parameter} & \textbf{Value} \\ \hline
        Lyapunov Control Weight ($V_{\mathrm{L}}$) & 1.0 \\ \hline
        Base Violation Penalty ($\lambda_0$) & 1.0 \\ \hline
        Queue Update Stepsize ($\eta_Q$) & 1.0 \\ \hline
        \hline\hline
        \textbf{PPO Hyperparameter} & \textbf{Value} \\ \hline
        Actor/Critic Learning Rate & $7.5\times10^{-5}$ / $1.5\times10^{-4}$ \\ \hline
        PPO Clip Parameter & 0.10 \\ \hline
        Target KL & 0.02 \\ \hline
    \end{tabular}
\end{table}

\section{Numerical Results} \label{Sec:Numerical_Results}
\subsection{Simulation Setup}
We consider a warehouse multi-camera wireless perception scenario based on the Warehouse MultiCam RF Dataset. The dataset contains $M=16$ fixed RGB cameras arranged over four viewing directions with four cameras per direction. The cameras observe a mobile robot moving along a predefined trajectory in the warehouse. The RF measurements in the dataset are generated using Sionna for a BS equipped with a $16\times 16$ antenna array and $16$ OFDM subcarriers. We define the correlated camera groups based on viewing-direction similarity as $\mathcal{G}_1=\{\mathrm{cam00}\text{--}\mathrm{cam03},\mathrm{cam15}\}$, $\mathcal{G}_2=\{\mathrm{cam12}\text{--}\mathrm{cam15}\}$, $\mathcal{G}_3=\{\mathrm{cam08}\text{--}\mathrm{cam12}\}$, and $\mathcal{G}_4=\{\mathrm{cam04}\text{--}\mathrm{cam07}\}$. Fig.~\ref{fig:warehouse_camera_scenario} shows synchronized camera views at frame 64, arranged by the correlated groups. The blue arrows indicate the robot moving direction. We adopt the TWI candidate set $\Omega=\{0,1,2,4,8,16,32\}$, which covers progressively longer integration windows without enumerating every intermediate length and keeps the task-level TWI search tractable. 

For visual encoding, we train a camera-agnostic JEPA encoder family offline using synchronized frames extracted from the dataset videos. The family contains $V=9$ variants formed by three encoder capacity levels and three input resolutions. Each encoder maps a camera frame to a $d_z$-dimensional latent representation, and the measured inference latency is used as the camera-side encoder cost $c_v^{\mathrm{enc}}$. Disjoint training, calibration, and test splits are used for the BS-side predictor $q_m(\cdot)$ and the uncertainty calibrator. Table~\ref{tab:simulation_parameters} summarizes the simulation parameters and offline model settings. Table~\ref{tab:encoder_variants} lists the camera-side encoder variants.

We evaluate a multi-camera perception task over $1000$ communication slots. In this use case, each instance is a scene-understanding update for monitoring the moving robot under varying channel conditions and camera availability. Four deterministic camera-outage windows are imposed, as summarized in Table~\ref{tab:outage_configuration}. During each outage window, the unavailable cameras cannot upload fresh latents, which models temporary camera or communication outages. The BS must therefore maintain their latent beliefs using available camera updates and the prediction mechanism. The outage windows consist of two short $30$-slot outages and two long $100$-slot outages, creating alternating normal, outage, and recovery phases for evaluating the runtime behavior of the proposed TWI control. The outage windows and unavailable camera pairs are fixed across all compared methods, so that performance differences reflect how each method maintains latent beliefs and controls resources.

\subsection{Benchmark Configurations}
We compare CoLA with three benchmarks.
\begin{itemize}
    \item \textbf{Un-CoLA} sets $\mathcal{C}_m(t)=\emptyset$, so prediction relies only on $\mathcal{D}_m(t-1)$. This comparison isolates the benefit of information from correlated cameras.

    \item \textbf{CoLA-OMA} retains CoLA's prediction mechanism and adaptive TWI selection, but replaces NOMA with orthogonal multiple access (OMA) and equal-power allocation among scheduled cameras. This comparison quantifies the benefit of NOMA-based multiple access.

    \item \textbf{CoLA-Token-ViT} replaces the compact JEPA latent representation with DINOv2 patch tokens~\cite{ref:dinov2}, which provide discriminative features for downstream perception without reconstructing the input image. This comparison assesses the advantage of the proposed compact latent representation over a token-based representation.
\end{itemize}

\begin{table*}[t]
    \centering
    \caption{Camera-Side JEPA Encoder Variants}
    \label{tab:encoder_variants}
    \footnotesize
    \setlength{\tabcolsep}{3pt}
    \begin{tabular}{|c|c|c|c|c|c|c|c|c|c|}
        \hline
        \multirow{2}{*}{\textbf{Input Resolution}} 
        & \multicolumn{3}{c|}{\textbf{Small Encoder (ResNet-18)}} 
        & \multicolumn{3}{c|}{\textbf{Base Encoder (ResNet-34)}} 
        & \multicolumn{3}{c|}{\textbf{Large Encoder (ResNet-50)}} \\ \cline{2-10}
        & \textbf{Variant} & \textbf{Inference Latency} & \textbf{Model Size}
        & \textbf{Variant} & \textbf{Inference Latency} & \textbf{Model Size}
        & \textbf{Variant} & \textbf{Inference Latency} & \textbf{Model Size} \\ \hline
        $128\times128$ 
        & 0 & 2.837 ms & 44.471 MB
        & 3 & 4.814 ms & 83.091 MB
        & 6 & 6.189 ms & 93.243 MB \\ \hline
        $224\times224$ 
        & 1 & 3.280 ms & 44.472 MB
        & 4 & 4.431 ms & 83.091 MB
        & 7 & 6.179 ms & 93.243 MB \\ \hline
        $384\times384$ 
        & 2 & 3.487 ms & 44.471 MB
        & 5 & 5.333 ms & 83.091 MB
        & 8 & 7.414 ms & 93.242 MB \\ \hline
    \end{tabular}
\end{table*}

\begin{table}[t]
    \centering
    \caption{Camera-Outage Configuration}
    \label{tab:outage_configuration}
    \footnotesize
    \setlength{\tabcolsep}{3pt}
    \begin{tabular}{|l|c|c|c|}
        \hline
        \textbf{Window} & \textbf{Unavailable Cameras} & \textbf{Slot Interval} & \textbf{Duration} \\ \hline
        Outage 1 & $\{\mathrm{cam00},\mathrm{cam04}\}$ & $[120,150)$ & $30$ slots \\ \hline
        Outage 2 & $\{\mathrm{cam09},\mathrm{cam12}\}$ & $[190,290)$ & $100$ slots \\ \hline
        Outage 3 & $\{\mathrm{cam02},\mathrm{cam06}\}$ & $[500,530)$ & $30$ slots \\ \hline
        Outage 4 & $\{\mathrm{cam11},\mathrm{cam15}\}$ & $[750,850)$ & $100$ slots \\ \hline
    \end{tabular}
\end{table}

\begin{figure}[!t]
    \centering
    \includegraphics[width=2.8in]{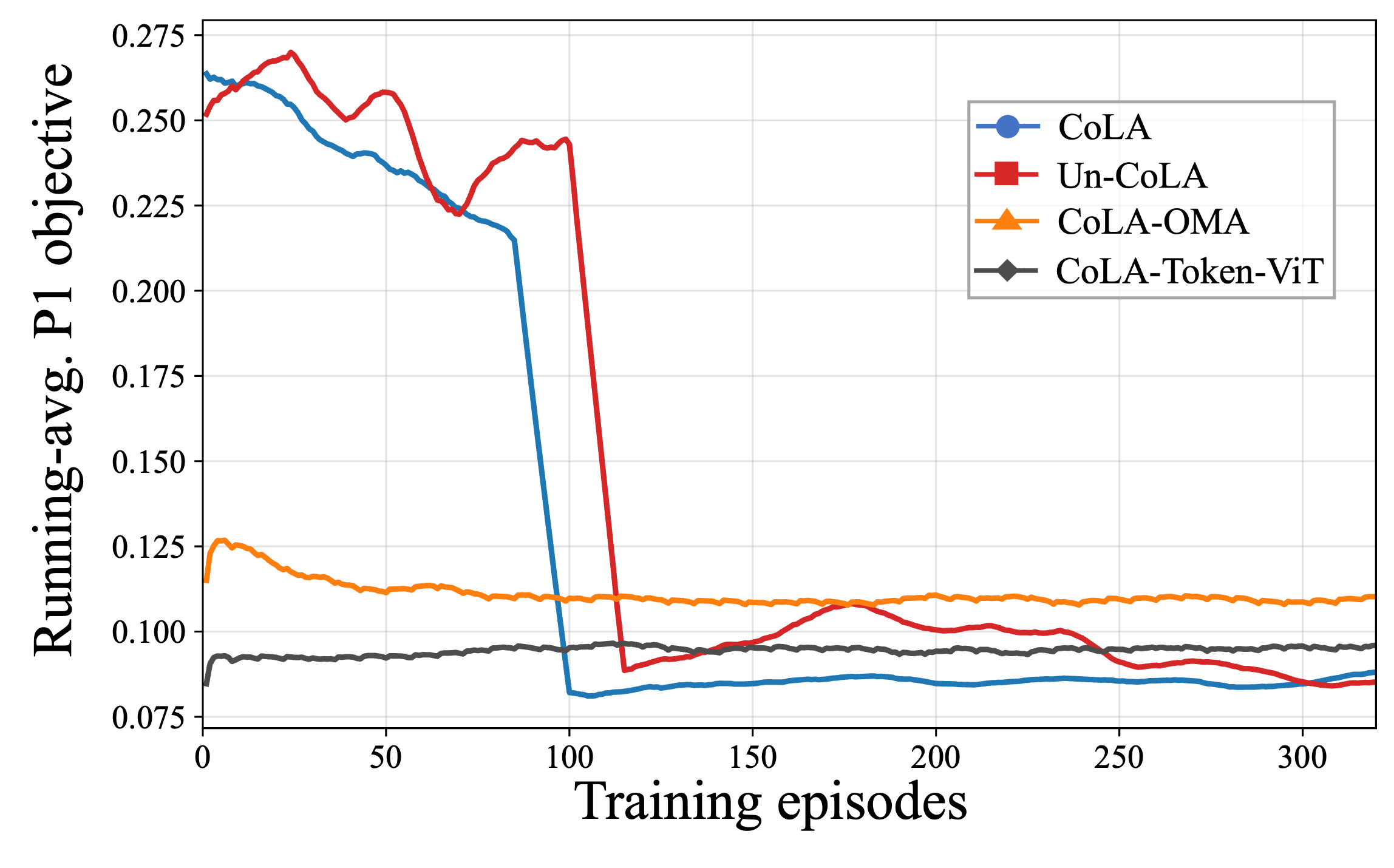}
    \caption{Training convergence of the slot-level PPO resource-control policy under the heuristic TWI selection policy.}
    \label{fig:training_convergence}
\end{figure}

\begin{figure}[!t]
    \centering
    \subfloat[Running-average $\mathbf{P1}$ objective.]{
        \includegraphics[width=3in]{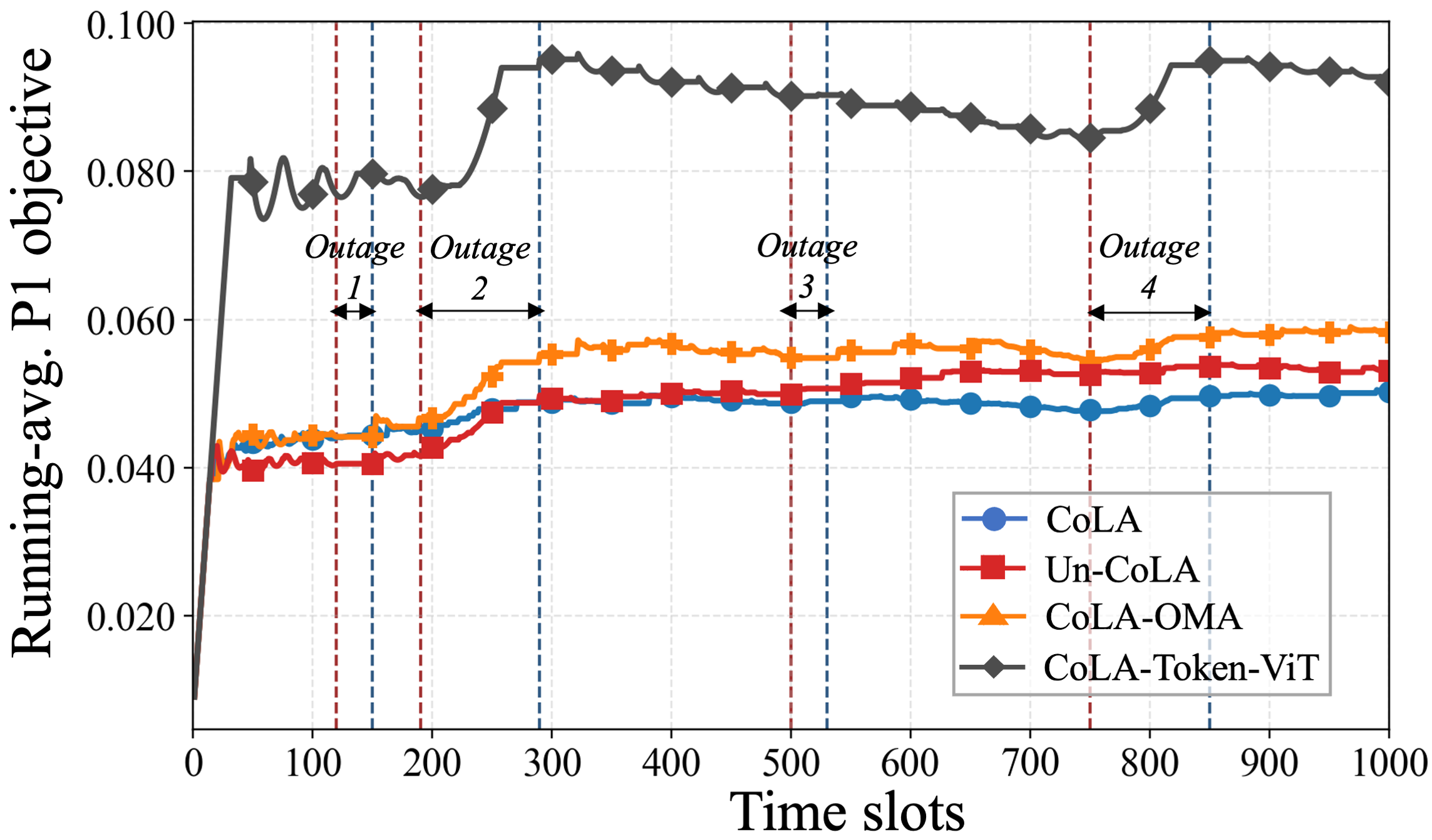}
        \label{fig:performance_reliability_objective}
    }
    
    \subfloat[CCDF of $u_m(T_k)/u_m^{\mathrm{tar}}$.]{
        \includegraphics[width=3in]{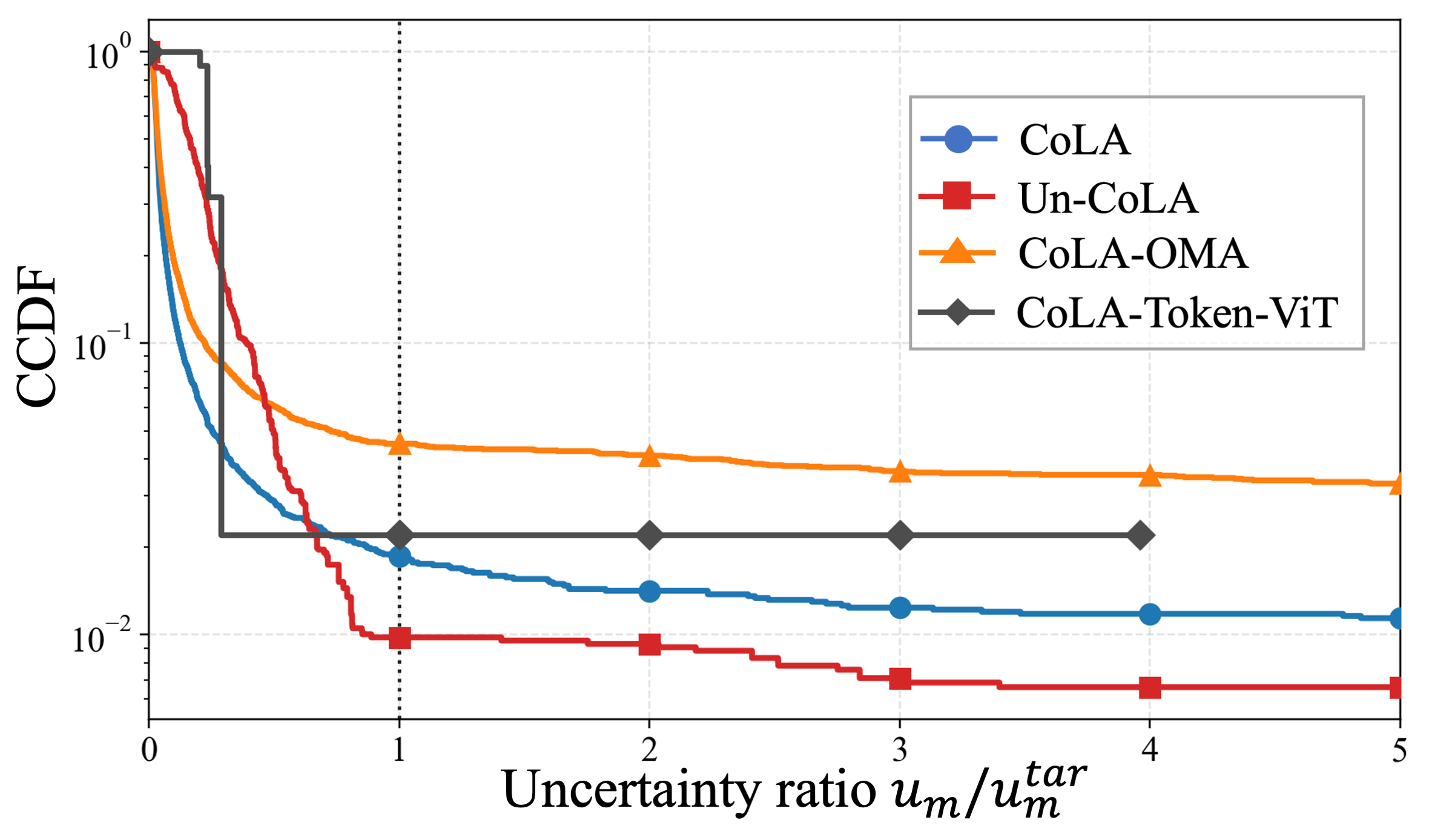}
        \label{fig:performance_reliability_cdf}
    }
    \caption{Performance and reliability comparison under the camera-outage setting. (a) Running-average $\mathbf{P1}$ objective over time slots. (b) CCDF of the uncertainty ratio $u_m(T_k)/u_m^{\mathrm{tar}}$ evaluated at the task-commitment slot $T_k$ on a logarithmic vertical scale. The vertical dotted line marks the uncertainty-violation threshold $u_m/u_m^{\mathrm{tar}}=1$.}
    \label{fig:performance_reliability}
\end{figure}

\subsection{Training Convergence}
We first examine the training convergence of the slot-level resource-control policy. During PPO training, the TWI is selected by a heuristic policy that cycles the nonzero TWI candidates in $\Omega$. Fig.~\ref{fig:training_convergence} shows the PPO training objective for CoLA and the benchmarks. The training objectives reach stable levels for all schemes, with CoLA and Un-CoLA settling at lower levels than CoLA-OMA and CoLA-Token-ViT. The trained slot-level policies are then used for Lyapunov-based TWI selection in the following subsections.

\begin{figure}[!t]
    \centering
    \subfloat[Short camera-outage windows. \label{fig:short_outage_twi}]{
        \includegraphics[width=3.5in]{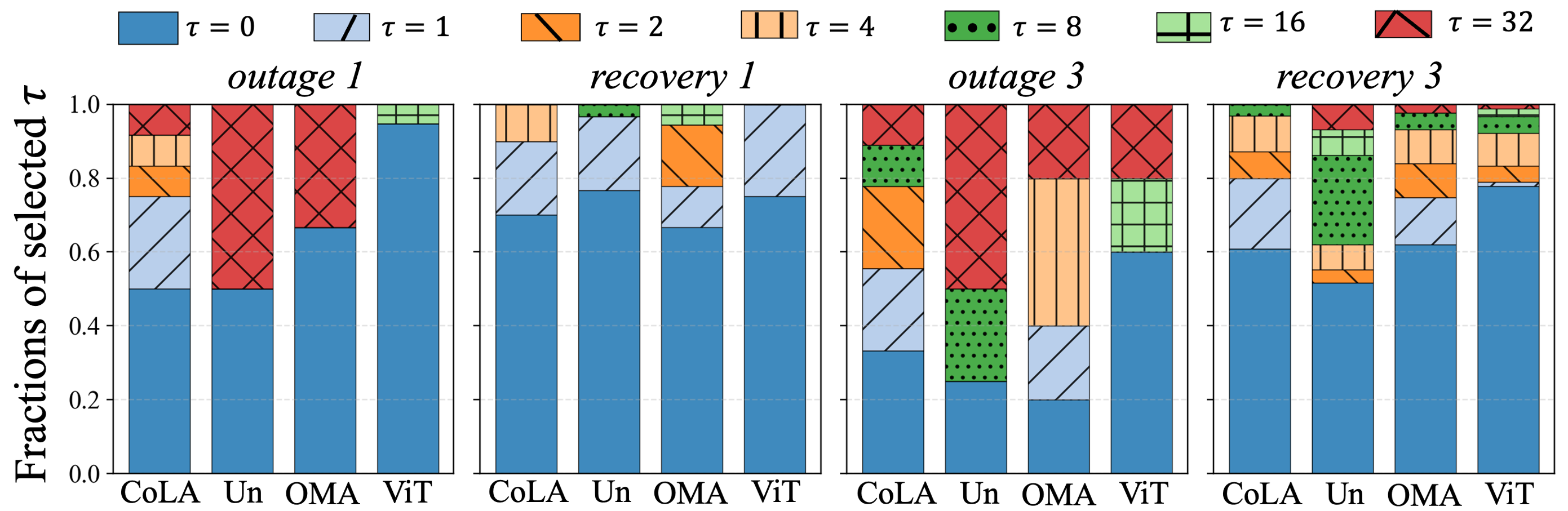}}
        
    \subfloat[Long camera-outage windows. \label{fig:long_outage_twi}]{
        \includegraphics[width=3.5in]{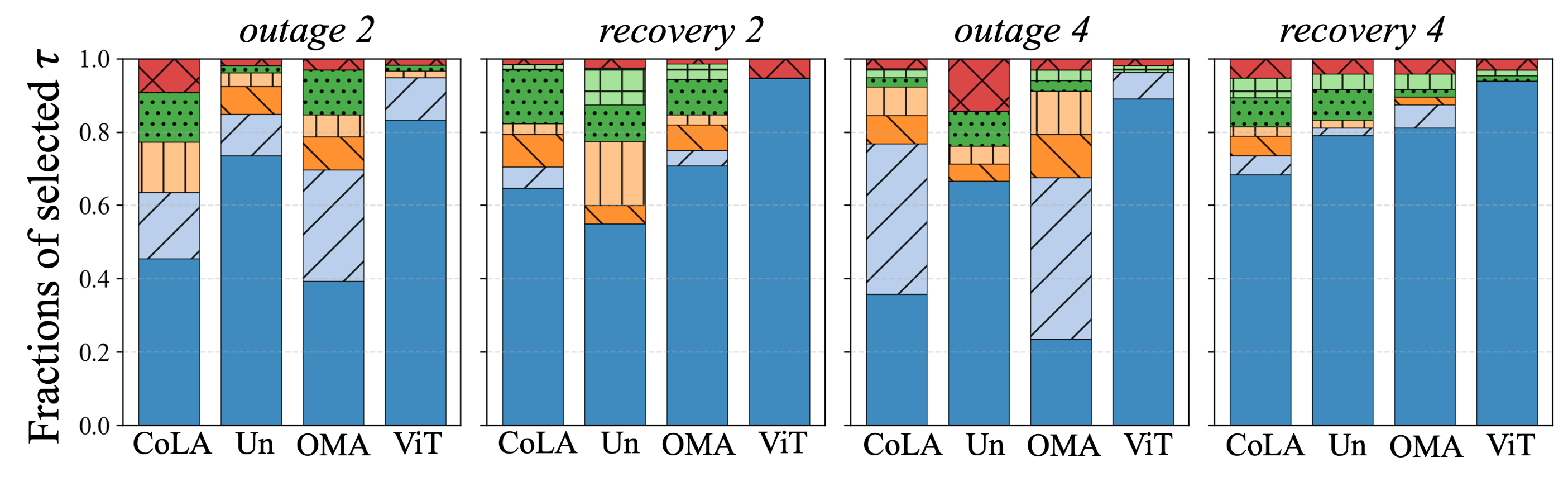}}

    \caption{TWI selection ratios under short and long camera-outage windows. The labels Un, OMA, and ViT denote Un-CoLA, CoLA-OMA, and CoLA-Token-ViT, respectively.}
    \label{fig:outage_twi}
\end{figure}

\subsection{Performance and Reliability Tradeoff}
We compare CoLA and the three benchmarks under the camera-outage setting. Fig.~\ref{fig:performance_reliability}\subref{fig:performance_reliability_objective} shows the running-average $\mathbf{P1}$ objective, where a lower value indicates a better joint AoL-resource tradeoff. CoLA achieves a lower running-average $\mathbf{P1}$ objective than Un-CoLA, CoLA-OMA, and CoLA-Token-ViT, with reductions of $2.7\%$, $10.5\%$, and $45.1\%$, respectively. The gap relative to CoLA-OMA shows the benefit of NOMA-based concurrent latent transmissions under limited communication opportunities, while the gap relative to CoLA-Token-ViT indicates that the compact JEPA latent representation is more effective than the token-based representation for maintaining BS-side latent beliefs under limited resources. 

Fig.~\ref{fig:performance_reliability}\subref{fig:performance_reliability_cdf} shows the complementary cumulative distribution function (CCDF) of the uncertainty ratio $u_m(T_k)/u_m^{\mathrm{tar}}$ evaluated at the task-commitment slot $T_k$ on a logarithmic vertical scale. At the threshold $u_m/u_m^{\mathrm{tar}}=1$ in (\ref{P1b}), the CCDF equals the uncertainty-violation rate. CoLA yields a violation rate of $1.87\%$, lower than those of CoLA-OMA and CoLA-Token-ViT. Un-CoLA attains the lowest violation rate, but incurs higher resource consumption than CoLA. Compared with Un-CoLA, CoLA reduces the mean resource cost by $26.5\%$ while maintaining a lower running-average $\mathbf{P1}$ objective. The CoLA-Token-ViT curve terminates at its largest observed ratio because its empirical CCDF is zero beyond that point and cannot be represented on the logarithmic scale.

\begin{figure}[!tbp]
    \centering
    \subfloat[Short camera-outage windows. \label{fig:short_outage_aol}]{
        \includegraphics[width=\columnwidth]{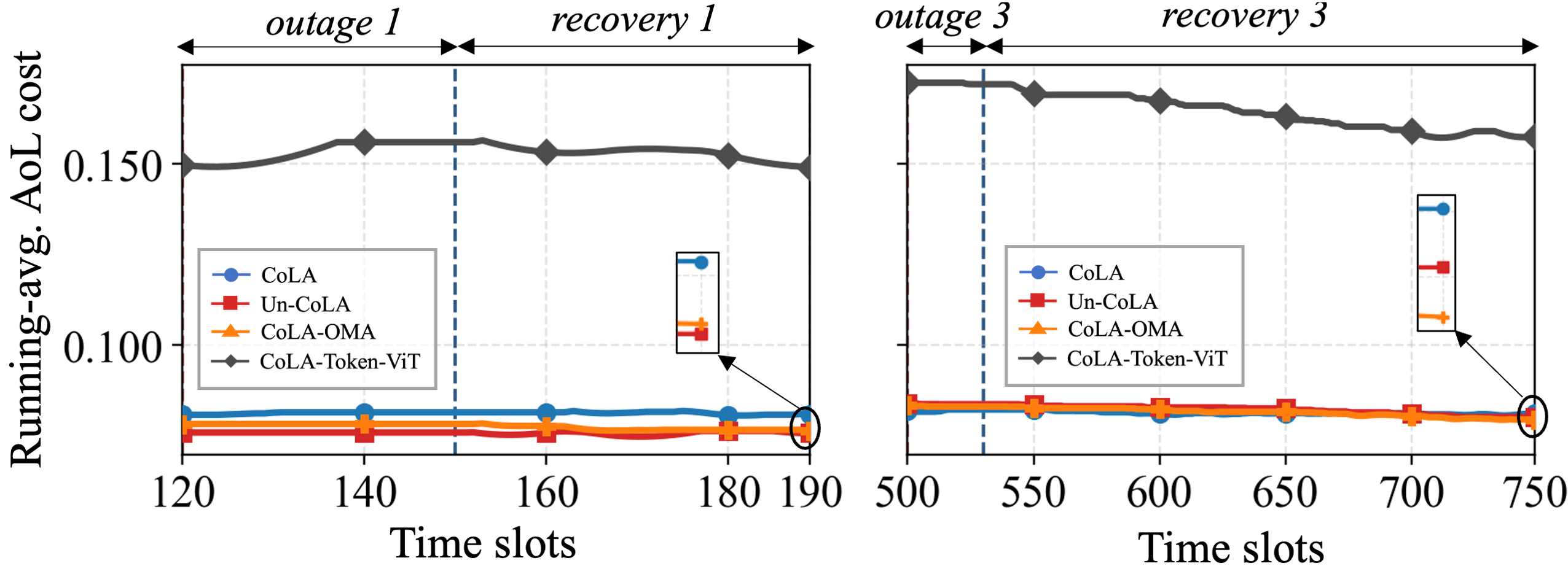}}

    \vspace{0.4em}

    \subfloat[Long camera-outage windows. \label{fig:long_outage_aol}]{
        \includegraphics[width=\columnwidth]{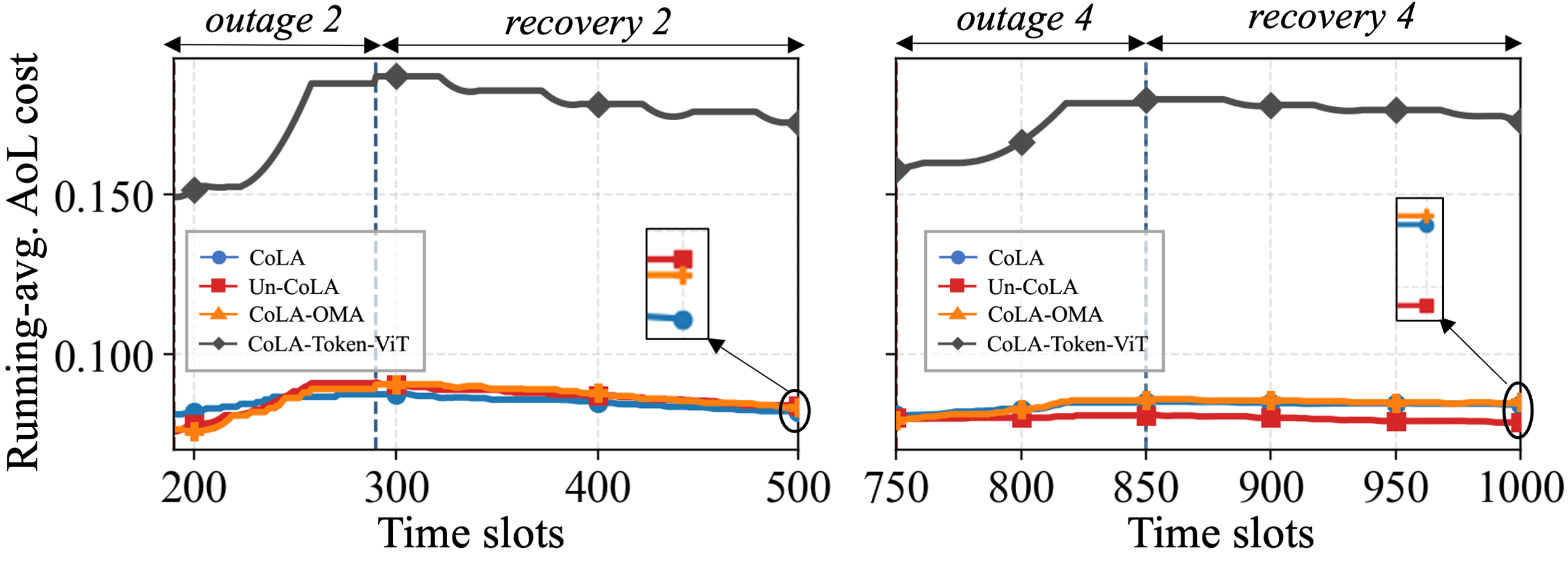}}

    \caption{Running-average AoL under short and long camera-outage windows.}
    \label{fig:outage_aol}
\end{figure}

\begin{figure}[!tbp]
    \centering
    \subfloat[Short camera-outage windows. \label{fig:short_outage_resource}]{
        \includegraphics[width=\columnwidth]{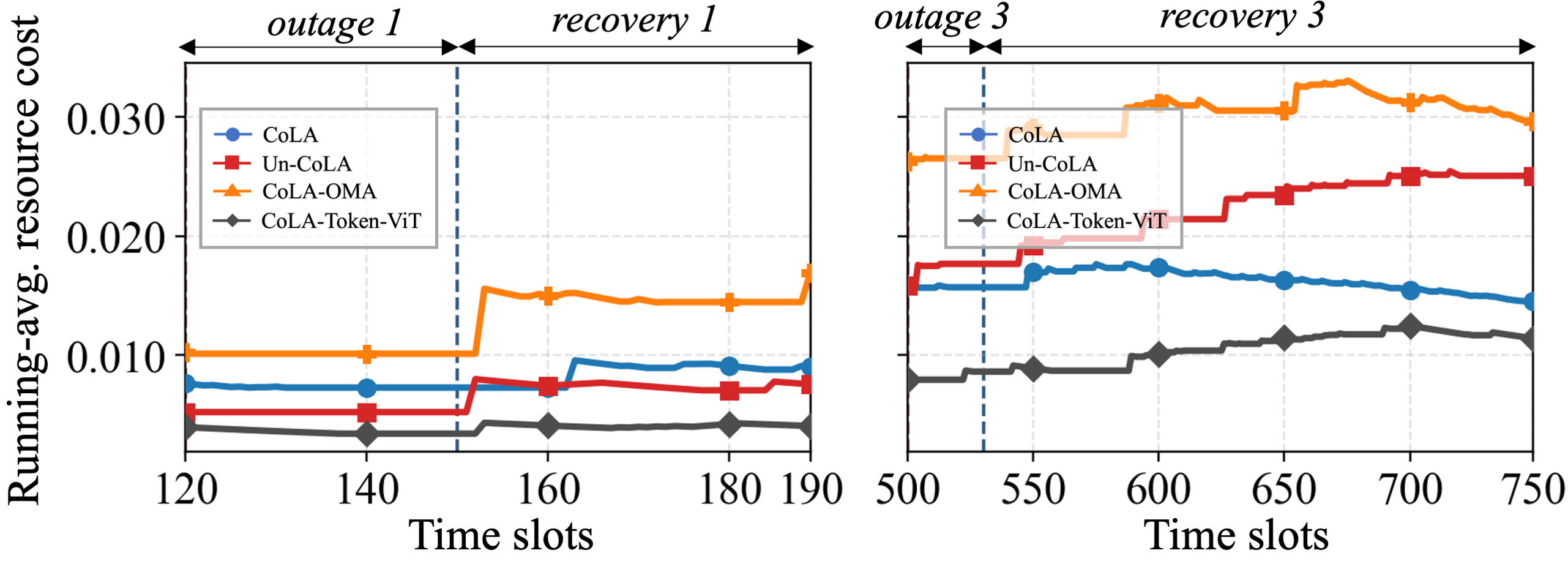}}

    \vspace{0.4em}

    \subfloat[Long camera-outage windows. \label{fig:long_outage_resource}]{
        \includegraphics[width=\columnwidth]{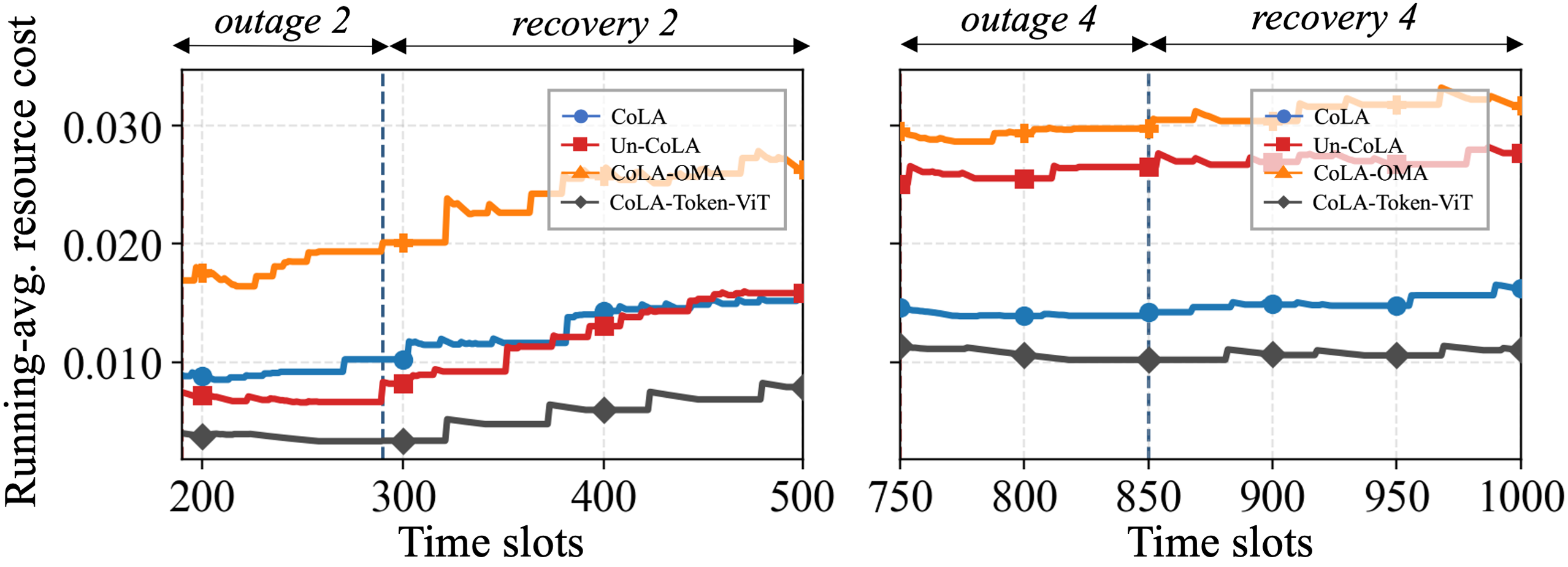}}

    \caption{Running-average resource cost under short and long camera-outage windows.}
    \label{fig:outage_resource}
\end{figure}

\subsection{Adaptive TWI Selection Under Camera Outages}
\label{subsec:adaptive_twi_outage}
We next examine how CoLA and the benchmarks adapt the TWI during camera-outage and recovery stages. For each outage window, we refer to the interval after the unavailable cameras become available again as the \textit{recovery stage}.

Fig.~\ref{fig:outage_twi} compares the TWI selection ratios in the outage and recovery stages. For CoLA, $\tau=0$ is selected more frequently in every recovery stage than in the corresponding outage stage, whereas positive TWIs are selected more frequently during outages. The CoLA distributions also differ between outages 2 and 4, showing that the selected TWI is not determined by outage duration alone. Compared with CoLA, Un-CoLA and CoLA-OMA place substantially more probability on $\tau=32$ during the short outages, while CoLA-Token-ViT remains more concentrated on $\tau=0$ across most stages. These selection patterns complement the preceding objective results: CoLA adapts the integration horizon across stages while retaining the lowest overall $\mathbf{P1}$ objective among the compared schemes.

\begin{table}[!t]
    \centering
    \caption{Resource-Cost Decomposition Under Camera-Outage and Recovery Stages}
    \label{tab:resource_decomposition_stage}
    \scriptsize
    \setlength{\tabcolsep}{2.2pt}
    \begin{tabular}{llccc}
        \toprule
        Method 
        & Stage 
        & Selected enc. 
        & Mean enc. cost 
        & Mean power cost \\
        \midrule
        \multirow{4}{*}{CoLA}
        & Short outage 
        & 1, 4 
        & 0.037 
        & 0.336 \\
        & Short recovery 
        & 1, 4, 6 
        & 0.061 
        & 0.619 \\
        & Long outage 
        & 1, 4 
        & 0.038 
        & 0.266 \\
        & Long recovery 
        & 1, 4, 6 
        & 0.064 
        & 0.641 \\
        \midrule
        \multirow{4}{*}{CoLA-OMA}
        & Short outage 
        & 2, 3 
        & 0.060 
        & 0.983 \\
        & Short recovery 
        & 2, 3 
        & 0.134 
        & 1.000 \\
        & Long outage 
        & 2, 3 
        & 0.031 
        & 0.782 \\
        & Long recovery 
        & 2, 3 
        & 0.126 
        & 1.000 \\
        \midrule
        \multirow{4}{*}{Un-CoLA}
        & Short outage 
        & 4 
        & 0.069 
        & 0.517 \\
        & Short recovery 
        & 4, 5 
        & 0.091 
        & 0.741 \\
        & Long outage 
        & 4, 5 
        & 0.071 
        & 0.521 \\
        & Long recovery 
        & 4, 5 
        & 0.090 
        & 0.740 \\
        \midrule
        \multirow{4}{*}{\makecell[l]{CoLA-\\Token-ViT}}
        & Short outage 
        & 5 
        & 0.020 
        & 0.315 \\
        & Short recovery 
        & 5 
        & 0.048 
        & 0.751 \\
        & Long outage 
        & 5 
        & 0.007 
        & 0.101 \\
        & Long recovery 
        & 5, 6 
        & 0.051 
        & 0.747 \\
        \bottomrule
    \end{tabular}
\end{table}

\subsection{Outage-Stage Component Analysis}
\label{subsec:outage_stage_component_analysis}
We further decompose the camera-outage behavior by stage and compare the running-average AoL and resource costs in Figs.~\ref{fig:outage_aol} and \ref{fig:outage_resource} to identify the performance gain of CoLA.

For the short outage windows, Fig.~\ref{fig:outage_aol}\subref{fig:short_outage_aol} shows that the JEPA-based schemes achieve similar running-average AoL costs, whereas CoLA-Token-ViT stays at a much higher AoL cost despite its lower resource cost in Fig.~\ref{fig:outage_resource}\subref{fig:short_outage_resource}. The gap indicates that the token-based representation is less effective in maintaining low AoL under limited resources during camera outages. Among these schemes, CoLA achieves an AoL cost similar to that of CoLA-OMA with lower resource cost, while CoLA differs from Un-CoLA mainly in resource cost after Outage 3. Therefore, the short outage windows primarily reveal resource-cost differences.

For the long outage windows in Fig.~\ref{fig:outage_aol}\subref{fig:long_outage_aol}, CoLA keeps its running-average AoL cost close to the corresponding costs of Un-CoLA and CoLA-OMA, whereas CoLA-Token-ViT remains at a much higher AoL cost. The sustained high AoL cost of CoLA-Token-ViT suggests that the token-based representation is less effective in maintaining low AoL during extended camera outages. Fig.~\ref{fig:outage_resource}\subref{fig:long_outage_resource} further shows that, among the JEPA-based schemes, CoLA incurs a lower resource cost than CoLA-OMA and Un-CoLA while maintaining a comparable AoL cost.

Table~\ref{tab:resource_decomposition_stage} further decomposes the resource cost into encoder and power components. Compared with CoLA-OMA, CoLA mainly reduces the power cost, supporting the role of NOMA-based multiple access in lowering the power-side burden. Compared with Un-CoLA, CoLA reduces both encoder and power costs, indicating that information from correlated cameras reduces the resources required to sustain comparable AoL. CoLA-Token-ViT has low encoder and outage-stage power costs, but its high AoL cost in Fig.~\ref{fig:outage_aol} indicates that low resource consumption alone is insufficient without reliable BS-side latent beliefs.

\begin{figure}[!t]
    \centering
    \subfloat[Running-average $\mathbf{P1}$ objective.]{
        \includegraphics[width=3in]{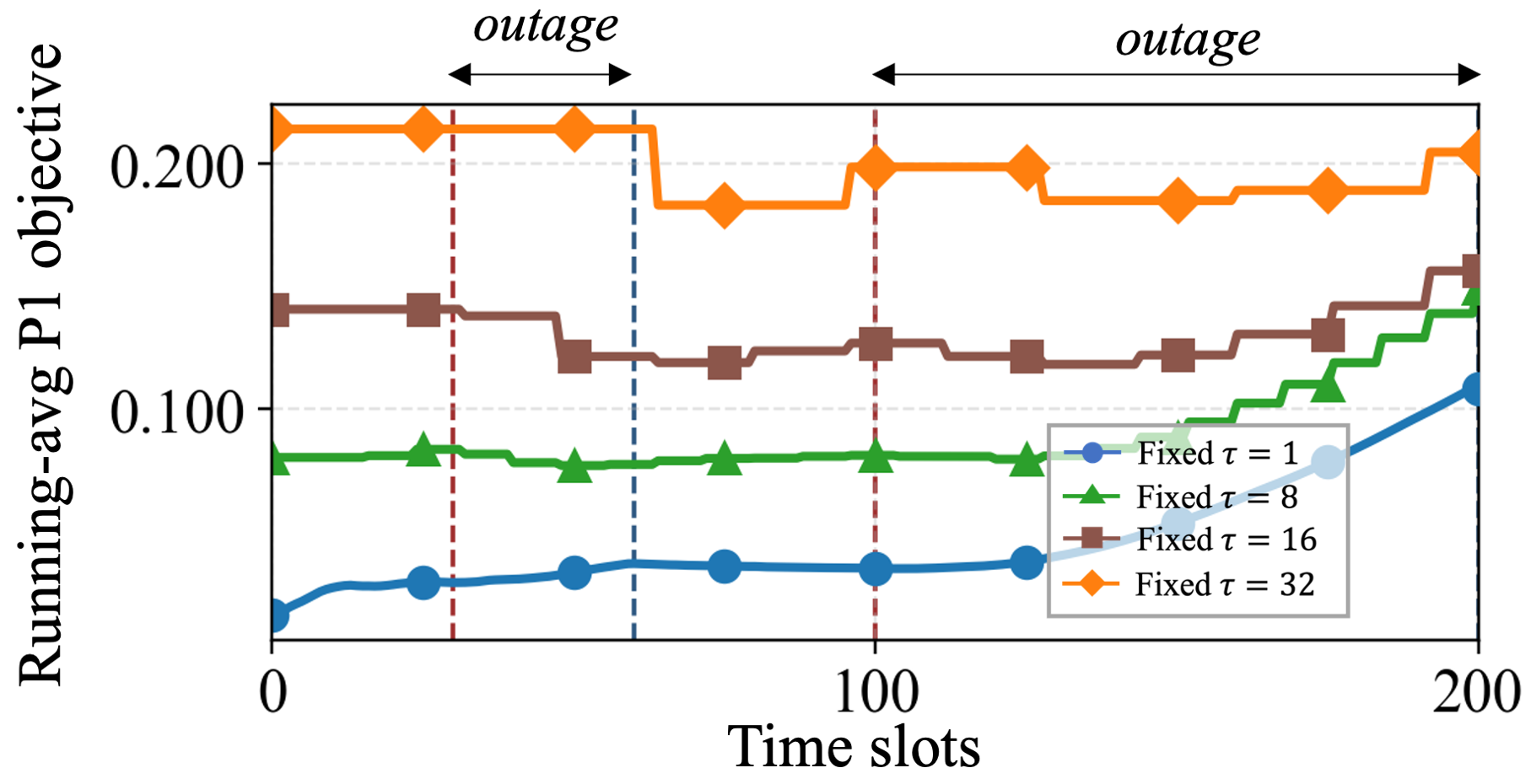}
        \label{fig:twi_selection_objective}
    }
    
    \subfloat[Running-average AoL.]{
        \includegraphics[width=3in]{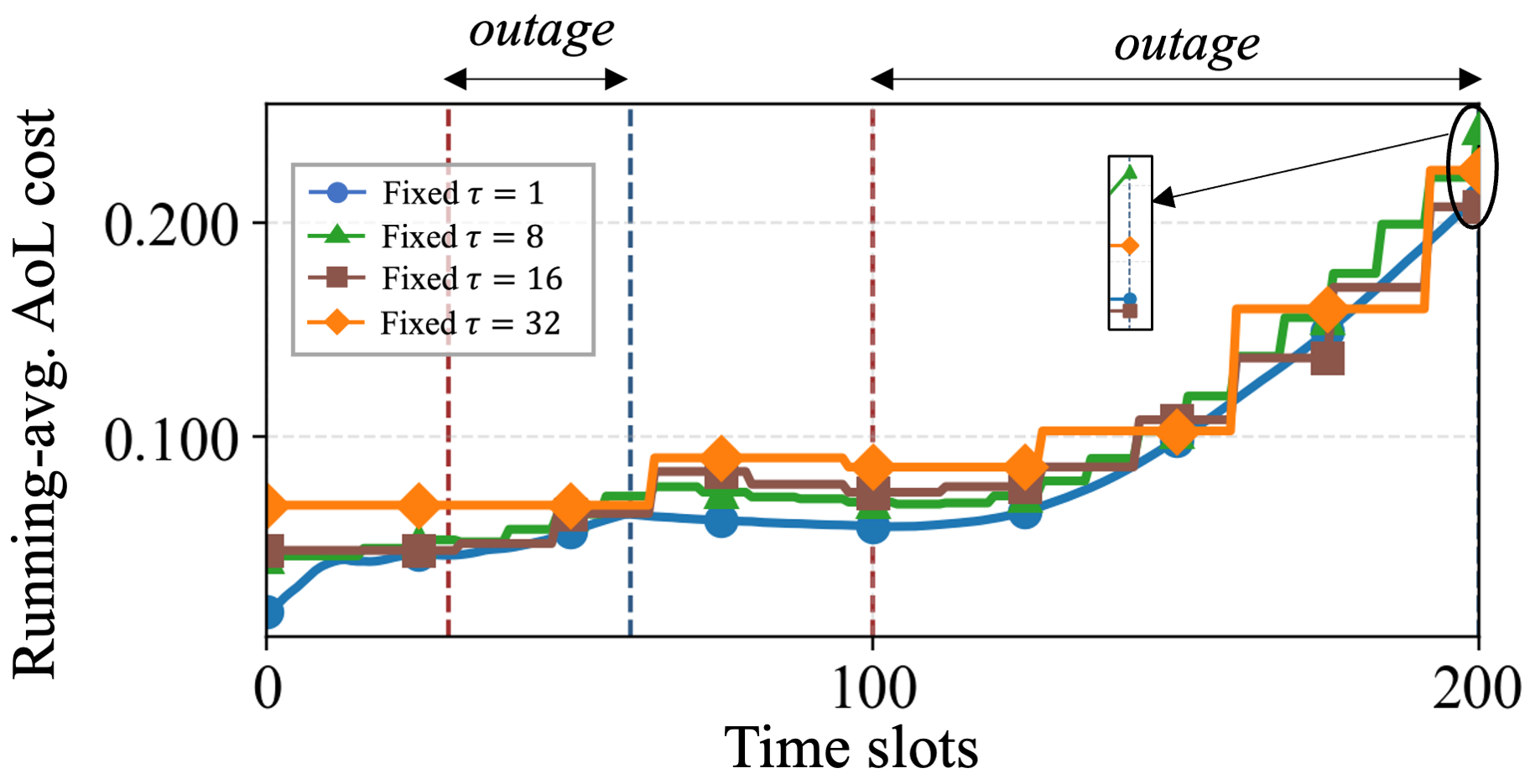}
        \label{fig:twi_selection_aol}
    }
    
    \subfloat[CCDF of $u_m(T_k)/u_m^{\mathrm{tar}}$.]{
        \includegraphics[width=3in]{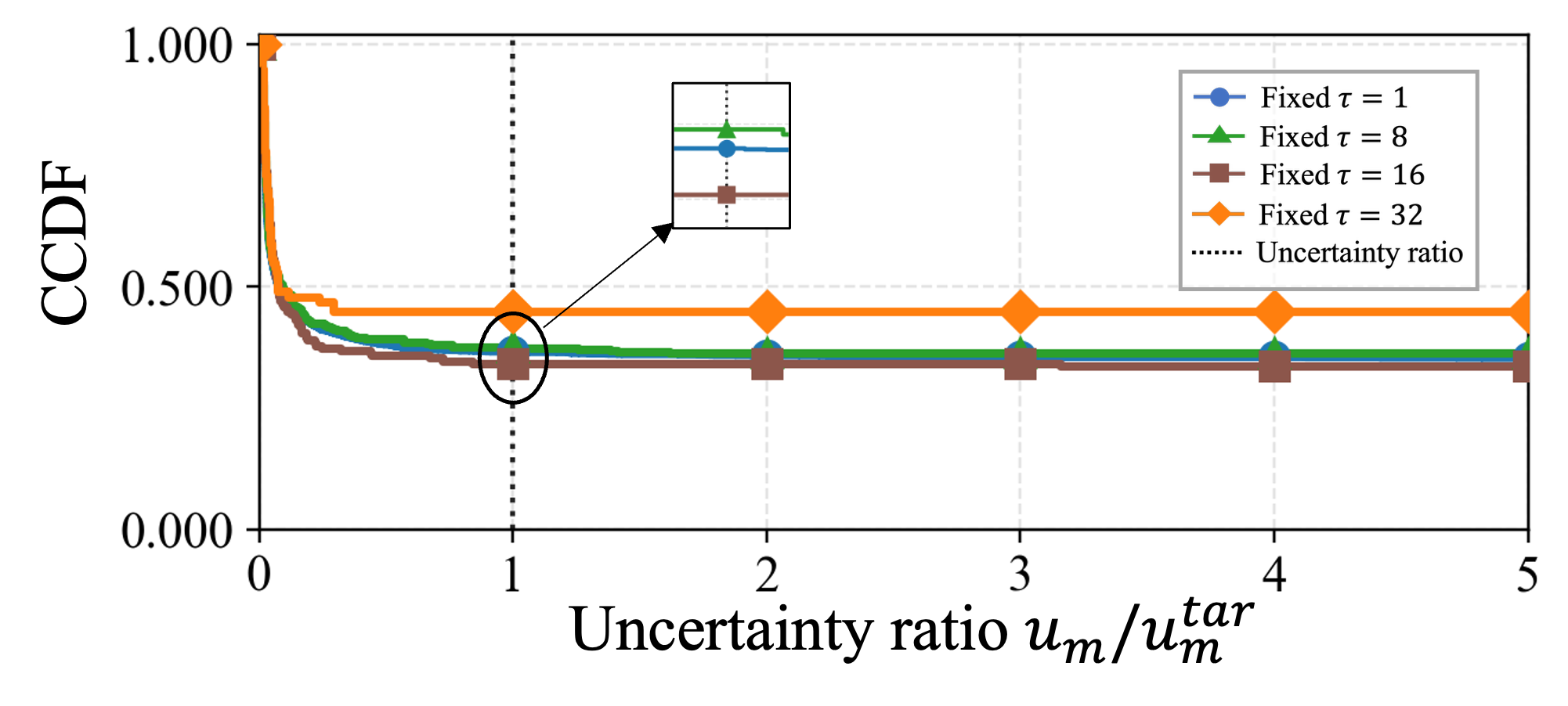}
        \label{fig:twi_selection_uncertainty}
    }
    \caption{Different fixed TWI policies with short and long camera-outage stages. (a) Running-average $\mathbf{P1}$ objective. (b) Running-average AoL. (c) CCDF of the uncertainty ratio $u_m(T_k)/u_m^{\mathrm{tar}}$ evaluated at the task-commitment slot $T_k$.}
    \label{fig:twi_selection_policy}
\end{figure}

\subsection{Impact of TWI Selection Policy}
\label{subsec:twi_selection_policy}
To isolate the impact of TWI selection, we evaluate four fixed-TWI policies, each using a constant TWI length $\tau \in \{1,8,16,32\}$. The evaluation uses a $200$-slot setting with a 30-slot short outage and a 100-slot long outage.

Fig.~\ref{fig:twi_selection_policy}\subref{fig:twi_selection_objective} shows that fixed $\tau=1$ achieves the lowest running-average $\mathbf{P1}$ objective among the fixed-TWI policies, whereas larger fixed TWIs lead to higher objective values. Since $\mathbf{P1}$ jointly captures AoL and resource cost, this trend shows that longer fixed TWIs can increase the accumulated cost despite providing more transmission opportunities. Fig.~\ref{fig:twi_selection_policy}\subref{fig:twi_selection_aol} further shows that fixed $\tau=1$ maintains a lower running-average AoL over the evaluated horizon. All fixed-TWI policies nevertheless exhibit AoL growth during the long outage stage because the BS cannot refresh the corresponding latent beliefs through fresh transmissions from unavailable cameras.

Fig.~\ref{fig:twi_selection_policy}\subref{fig:twi_selection_uncertainty} shows that the uncertainty-violation rate does not decrease monotonically as the fixed TWI length increases. Fixed $\tau=16$ reduces the violation rate relative to the fixed $\tau=1$ and $\tau=8$ policies, but incurs a higher AoL-resource objective value. Fixed $\tau=32$ yields the highest violation rate and the highest objective value among the fixed-TWI policies. Thus, no fixed TWI length simultaneously achieves the lowest AoL-resource objective and uncertainty-violation rate. Together with the preceding CoLA results, this finding supports adapting the TWI to latent-belief uncertainty and camera availability.

\begin{figure}[!t]
    \centering
    \subfloat[Running-average $\mathbf{P1}$ objective.]{
        \includegraphics[width=3in]{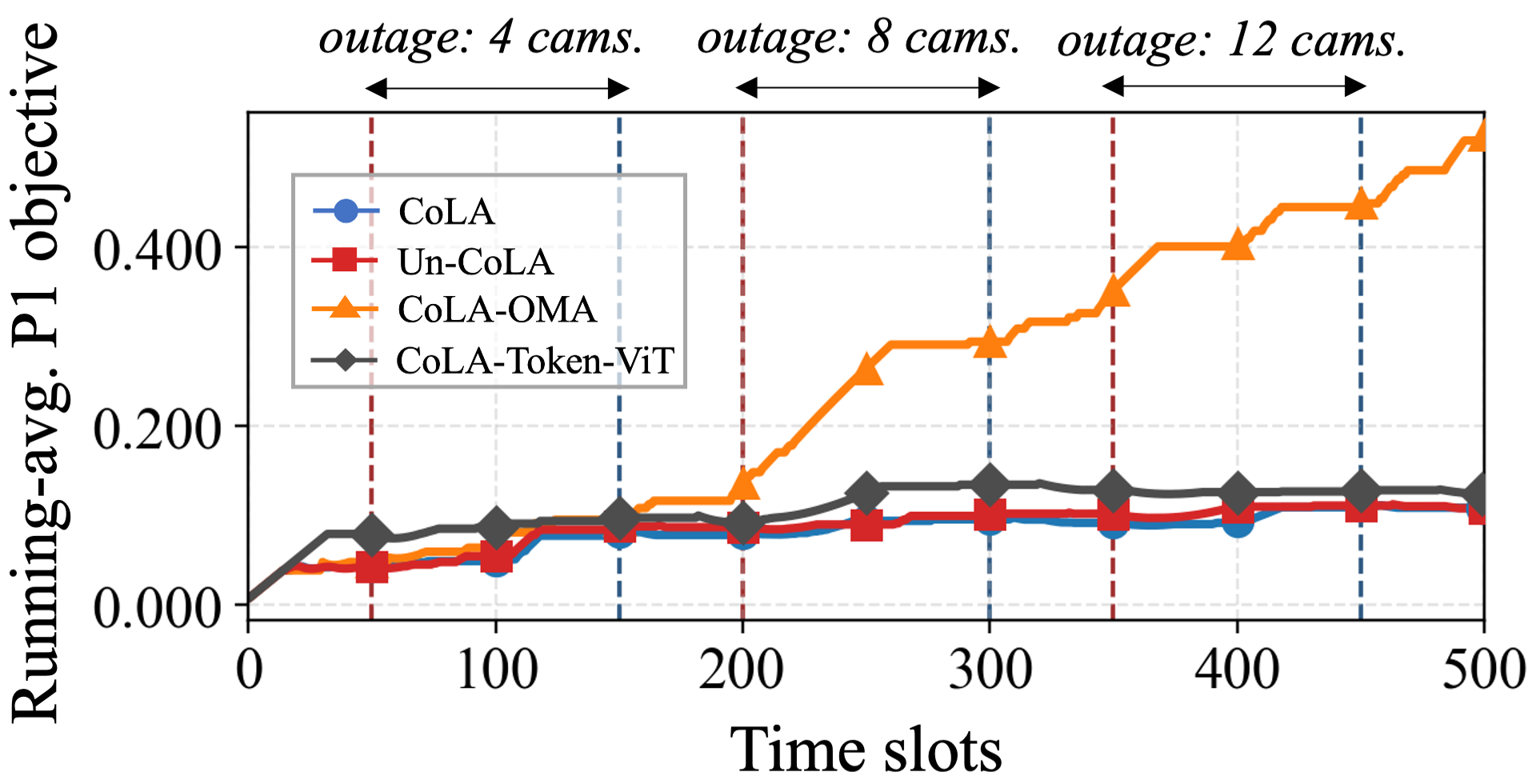}
        \label{fig:outage_severity_objective}
    }
    
    \subfloat[AoL--uncertainty operating points.]{
        \includegraphics[width=3in]{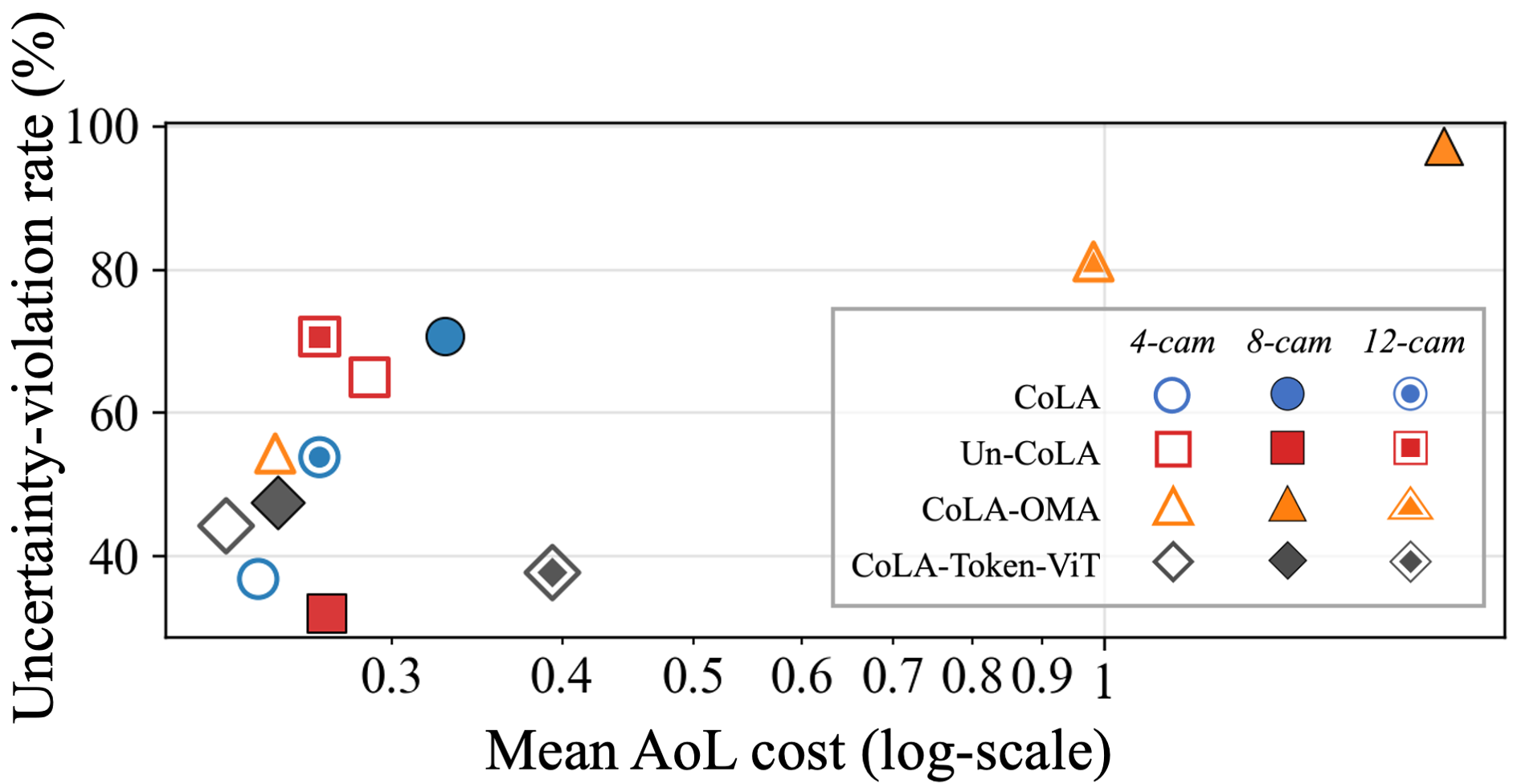}
    \label{fig:outage_severity_aol_uncertainty}
    }
    \caption{Impact of camera-outage severity. (a) Running-average $\mathbf{P1}$ objective. (b) Stage-wise AoL--uncertainty tradeoff.}
    \label{fig:outage_severity}
\end{figure}

\subsection{Impact of Outage Severity}  \label{subsec:outage_severity}
To assess robustness to camera-outage severity, we consider an ablation scenario with alternating outage and recovery stages. The outage stages make $4$, $8$, and $12$ cameras unavailable, corresponding to mild, moderate, and severe outage levels, respectively. Fig.~\ref{fig:outage_severity}\subref{fig:outage_severity_objective} shows the running-average $\mathbf{P1}$ objective under different outage-severity levels. The objective gap widens as the outage severity increases, indicating that maintaining low AoL becomes more costly when more cameras are unavailable. CoLA remains stable across the severity levels, whereas CoLA-OMA exhibits a sharp increase in the objective during the severe outage stage.

Fig.~\ref{fig:outage_severity}\subref{fig:outage_severity_aol_uncertainty} further shows the outage-stage AoL--uncertainty tradeoff. CoLA-OMA moves toward the high-AoL and high-violation region as the outage severity increases, which is consistent with the objective increase in Fig.~\ref{fig:outage_severity}\subref{fig:outage_severity_objective}. In contrast, Un-CoLA and CoLA-Token-ViT show comparable AoL and uncertainty levels, but do not reduce the running-average $\mathbf{P1}$ objective relative to CoLA. The lower $\mathbf{P1}$ objective achieved by CoLA therefore reflects a better balance among AoL, reliability, and resource cost under severe camera outages.

\section{Conclusion} \label{Sec:Conclusion}
We studied adaptive TWI selection for joint AoL-resource minimization in multi-camera wireless perception under limited uplink resources. The formulation couples task-level TWI selection with BS-side latent-belief maintenance and slot-level camera scheduling, encoder selection, and NOMA power allocation under prediction-reliability constraints. CoLA uses Lyapunov optimization for TWI selection based on AoL-resource cost and prediction uncertainty, and PPO for resource control within the selected TWI. Numerical results show that CoLA achieves the lowest running-average $\mathbf{P1}$ objective among the benchmarks while maintaining prediction reliability, with TWI choices that vary across outage and recovery stages rather than following outage duration alone. CoLA also remains stable as camera-outage severity increases. The fixed-TWI results further show that the lowest AoL-resource objective and uncertainty-violation rate are achieved by different window lengths, supporting adaptive TWI selection across operating conditions.


\begin{thebibliography}{99}
    \bibitem{ref:goal_oriented_6g_goals}
    E.C. Strinati, P.D. Lorenzo, V. Sciancalepore, A. Aijaz, M. Kountouris, D. Gündüz, P. Popovski, M. Sana, P.A. Stavrou, B. Soret, N. Cordeschi, S. Scardapane, M. Merluzzi, L. Zanzi, M.B. Renato, T. Quek, N.D. Pietro, O. Forceville, F. Costanzo, and P. Li, "Goal-Oriented and Semantic Communication in 6G AI-Native Networks: The 6G-GOALS Approach," \textit{in Proc. EuCNC/6G Summit}, pp. 1-6, Jun. 2024.

    \bibitem{ref:freshness_to_semantics}
    J. Luo, E. Delfani, M. Salimnejad, and N. Pappas, "From Information Freshness to Semantics of Information and Goal-oriented Communications," \textit{arXiv:2512.12758}. [Online]. Available: https://arxiv.org/abs/2512.12758

    \bibitem{ref:semcom_standardization_2025}
    P. Zhang, X. Xu, M. Sun, H. Gao, N. Ma, X. Wang, R. Zhang, J. Wang, and D. Niyato, "Toward Native AI in 6G Standardization: The Roadmap of Semantic Communication," \textit{IEEE Commun. Stand. Mag.}, pp. 1-11, Mar. 2026. \textit{(Early Access)}

    \bibitem{ref:nokia_physical_ai_ran_2026}
    H. Viswanathan, "Physical AI: Redefining RAN and telco monetization," Nokia, Apr. 8, 2026. [Online]. Available: \url{https://www.nokia.com/blog/physical-ai-redefining-ran-and-telco-monetization/}

    \bibitem{ref:semantic_collaborative_perception_2025}
    Y. Liu, Q. Huang, R. Li, Z. Zhao, S. Zhao, Y. Liu, Y. Zhu, and H. Zhang, "Semantic Communication Empowered Collaborative Perception in Constrained Networks," \textit{IEEE Wireless Commun. Lett.}, vol. 14, no. 3, pp. 701-705, Mar. 2025.

    \bibitem{ref:latent_multimodal_dynamics}
    C.B. Chaaya, A.M. Girgis, and M. Bennis, "Learning Latent Multimodal Dynamics for Optimized Resource Planning," \textit{IEEE Trans. Wirel. Commun.}, vol. 25, pp. 9591-9607, Dec. 2025.

    \bibitem{ref:jepa_msac_2026}
    C. Zheng, J. He, G. Cai, N. Li, M. Bennis, H. Wymeersch, and M. Debbah, "JEPA-MSAC: A Joint-Embedding Predictive Architecture for Multimodal Sensing-Assisted Communications," \textit{arXiv:2603.29796}. [Online]. Available: https://arxiv.org/abs/2603.29796

    \bibitem{ref:wireless_jepa_2026}
    V. Chu, O. Mashaal, and H. Abou-Zeid, "WirelessJEPA: A Multi-Antenna Foundation Model using Spatio-temporal Wireless Latent Predictions," \textit{arXiv:2601.20190}. [Online]. Available: https://arxiv.org/abs/2601.20190

    \bibitem{ref:ijepa}
    M. Assran, Q. Duval, I. Misra, P. Bojanowski, P. Vincent, M. Rabbat, Y. LeCun, and N. Ballas, "Self-Supervised Learning From Images With a Joint-Embedding Predictive Architecture," \textit{in Proc. IEEE/CVF CVPR}, pp. 15619-15629, Jun. 2023.

    \bibitem{ref:v2x_jepa_2026}
    N. Mayumu, X. Deng, A.B. Bagula, S.U.R. Khan, and P. Mukala, "V2X-JEPA: Self-Supervised Multiagent Joint Embedding Predictive Architecture for Robust Vehicle-to-Everything Perception," \textit{IEEE Internet Things J.}, vol. 13, no. 8, pp. 16609-16620, Apr. 2026.

    \bibitem{ref:aoi_survey_yates}
    R.D. Yates, Y. Sun, D.R. Brown, S.K. Kaul, E. Modiano, and S. Ulukus, "Age of Information: An Introduction and Survey," \textit{IEEE J. Sel. Areas Commun.}, vol. 39, no. 5, pp. 1183-1210, May. 2021.

    \bibitem{ref:twi_physical_ai}
    A. Mishra, J.H.I.D. Souza, and P. Popovski, "Temporal Windows of Integration for Multisensory Wireless Systems as Enablers of Physical AI," \textit{arXiv:2512.09589}. [Online]. Available: https://arxiv.org/abs/2512.09589

    \bibitem{ref:noma_survey_2017}
    Z. Ding, X. Lei, G.K. Karagiannidis, R. Schober, J. Yuan, and V.K. Bhargava, "A Survey on Non-Orthogonal Multiple Access for 5G Networks: Research Challenges and Future Trends," \textit{IEEE J. Sel. Areas Commun.}, vol. 35, no. 10, pp. 2181-2195, Oct. 2017.

    \bibitem{ref:aoii_semantic}
    A. Maatouk, M. Assaad, and A. Ephremides, "The Age of Incorrect Information: An Enabler of Semantics-Empowered Communication," \textit{IEEE Trans. Wirel. Commun.}, vol. 22, no. 4, pp. 2621-2635, Apr. 2023.

    \bibitem{ref:aoi_semantic_twc2026}
    X. Han, B. Feng, Y. Wu, X.-G. Xia, W. Zhang, and S. Sun, "Age of Semantic Information-Aware Wireless Transmission for Remote Monitoring Systems," \textit{IEEE Trans. Wirel. Commun.}, vol. 25, pp. 2939-2953, Aug. 2026.

    \bibitem{ref:aoiv_semantic_tc2025}
    M. Salimnejad, M. Kountouris, A. Ephremides, and N. Pappas, "Age of Information Versions: A Semantic View of Markov Source Monitoring," \textit{IEEE Trans. Commun.}, vol. 73, no. 12, pp. 14486-14502, Dec. 2025.

    \bibitem{ref:aosi_wcnc2024}
    L. Chen and J. Gong, "Multi-source Scheduling and Resource Allocation for Age-of-Semantic-Importance Optimization in Status Update Systems," \textit{in Proc. IEEE WCNC }, pp. 1-6, Apr. 2024.

    \bibitem{ref:correlated_wiener_scheduling_2025}
    A. Li and E. Uysal, "Optimal Sampling and Scheduling for Remote Fusion Estimation of Correlated Wiener Processes," \textit{arXiv:2510.22288}. [Online]. Available: https://arxiv.org/abs/2510.22288

    \bibitem{ref:distribution_aware_aoi_lqr_2026}
    A.Y. Etcibasi, C.E. Koksal, and E. Ekici, "When Freshness Is Not Enough: Distribution-Aware Age of Information for Networked LQR Control," \textit{arXiv:2606.04361}. [Online]. Available: https://arxiv.org/abs/2606.04361

    \bibitem{ref:correlated_sources_aoi_2022}
    V. Tripathi and E. Modiano, "Optimizing Age of Information With Correlated Sources," \textit{in Proc. ACM MobiHoc}, pp. 41-50, Oct. 2022.

    \bibitem{ref:correlated_camera_aoi_2019}
    Q. He, G. D\'an, and V. Fodor, "Joint Assignment and Scheduling for Minimizing Age of Correlated Information," \textit{IEEE/ACM Trans. Netw.}, vol. 27, no. 5, pp. 1887-1900, Oct. 2019.

    \bibitem{ref:vaoi_noma_2026}
    G. Karevvanavar, R.V. Bhat, and N. Pappas, "Version AoI Optimization under Power and General Distortion Constraints in Uplink NOMA," \textit{arXiv:2603.28631}. [Online]. Available: https://arxiv.org/abs/2603.28631

    \bibitem{ref:transformer_aoi_noma_2026}
    M. Ansarifard, M.K. Sharma, K.C. Joshi, and G. Exarchakos, "Transformer Actor-Critic for Efficient Freshness-Aware Resource Allocation," \textit{arXiv:2602.22774}. [Online]. Available: https://arxiv.org/abs/2602.22774

    \bibitem{ref:semcom_noma_2025}
    I. Ahmed, Y. Sun, J. Fu, A. Köse, L. Musavian, M. Xiao, and B. Özbek, "Semantic Communications in 6G: Coexistence, Multiple Access, and Satellite Networks," \textit{IEEE Commun. Stand. Mag.}, vol. 9, no. 4, pp. 58-64, Dec. 2025.

    \bibitem{ref:aoii_semantic_noma_xr_2023}
    J. Chen, J. Wang, C. Jiang, and J. Wang, "Age of Incorrect Information in Semantic Communications for NOMA Aided XR Applications," \textit{IEEE J. Sel. Top. Signal Process.}, vol. 17, no. 5, pp. 1093-1105, Sep. 2023.

    \bibitem{ref:sic_noma_2020}
    Z. Ding, R. Schober, and H.V. Poor, "Unveiling the Importance of SIC in NOMA Systems---Part I: State of the Art and Recent Findings," \textit{IEEE Commun. Lett.}, vol. 24, no. 11, pp. 2373-2377, Nov. 2020.

    \bibitem{ref:uplink_noma_sic}
    X. Mu and Y. Liu, "Exploiting Semantic Communication for Non-Orthogonal Multiple Access," \textit{IEEE J. Sel. Areas Commun.}, vol. 41, no. 8, pp. 2563-2576, Aug. 2023.

    \bibitem{ref:cantelli_1928}
    F.P. Cantelli, "Sui confini della probabilit\`a," \textit{in Proc. Int. Congr. Math.}, vol. 6, pp. 47-59, 1928.

    \bibitem{Karp1972}
    R. M. Karp, "Reducibility among combinatorial problems," in \textit{Raymond E. Miller and James W. Thatcher, editors, Complexity of Computer Computations,}, Plenum Press, pp. 85-103, 1972.

    \bibitem{ref:warehouse_dataset_zenodo}
    H. P. Madushanka, S. Samarakoon, and M. Bennis, "Warehouse MultiCam RF Dataset," Zenodo, [Online]. Available: https://zenodo.org/records/20315129

    \bibitem{ref:dinov2}
    M. Oquab, T. Darcet, T. Moutakanni, H. Vo, M. Szafraniec, V. Khalidov, P. Fernandez, D. Haziza, F. Massa, A. El-Nouby, M. Assran, N. Ballas, W. Galuba, R. Howes, P.Y. Huang, S.W. Li, I. Misra, M. Rabbat, V. Sharma, G. Synnaeve, H. Xu, H. Jegou, J. Mairal, P. Labatut, A. Joulin, and P. Bojanowski, "DINOv2: Learning Robust Visual Features without Supervision," \textit{arXiv:2304.07193}. [Online]. Available: https://arxiv.org/abs/2304.07193

\end{thebibliography}
\end{document}